\newtheorem{theorem}{Theorem}
\newtheorem{lemma}{Lemma}
\newtheorem{corollary}{Corollary}
\newtheorem{remark}{Remark}
\theoremstyle{definition}
\newtheorem{definition}{Definition}
\newtheorem{problem}{Problem}
\title{\LARGE \bf
    Safety-Critical Control with Guaranteed Lipschitz Continuity via Filtered Control Barrier Functions}
\author{Shuo Liu$^{1}$, Wei Xiao$^{2}$ and Calin A. Belta$^{3}$
\thanks{This work was supported in part by the NSF under grant IIS-2024606 at Boston University.}
\thanks{$^{1}$S. Liu is with the department of Mechanical Engineering, Boston
University, Brookline, MA, USA. 
        {\tt\small liushuo@bu.edu}}%
\thanks{$^{2}$W. Xiao is with the Computer Science and Artificial Intelligence Lab, Massachusetts Institute of Technology, Cambridge, MA, USA 
        {\tt\small weixy@mit.edu}}%
\thanks{$^{3}$C. Belta is with the Department of Electrical and Computer Engineering and the Department of Computer Science, University of Maryland, College Park, MD, USA 
        {\tt\small cbelta@umd.edu}}%
}
\begin{document} 
\maketitle

\begin{abstract}
In safety-critical control systems, ensuring both system safety and smooth control input is essential for practical deployment. Existing Control Barrier Function (CBF) frameworks, especially High-Order CBFs (HOCBFs), effectively enforce safety constraints, but also raise concerns about the smoothness of the resulting control inputs. While smoothness typically refers to continuity and differentiability, it does not by itself ensure bounded input variation. In contrast, Lipschitz continuity is a stronger form of continuity that not only is necessary for the theoretical guarantee of safety, but also bounds the rate of variation and eliminates abrupt changes in the control input. Such abrupt changes can degrade system performance or even violate actuator limitations, yet current CBF-based methods do not provide Lipschitz continuity guarantees. This paper introduces Filtered Control Barrier Functions (FCBFs), which extend HOCBFs by incorporating an auxiliary dynamic system—referred to as an input regularization filter—to produce Lipschitz continuous control inputs. The proposed framework ensures safety, control bounds, and Lipschitz continuity of the control inputs simultaneously by integrating FCBFs and HOCBFs within a unified quadratic program (QP). Theoretical guarantees are provided and simulations on a unicycle model demonstrate the effectiveness of the proposed method compared to standard and smoothness-penalized HOCBF approaches.
\end{abstract}

\section{Introduction}
\label{sec:Introduction}

Safety is a fundamental concern in the design and operation of autonomous systems. To address this, a large body of research has incorporated safety constraints into optimal control formulations through the use of Barrier Functions (BFs) and Control Barrier Functions (CBFs). Originally developed in the context of optimization \cite{boyd2004convex}, BFs are Lyapunov-like functions \cite{tee2009barrier} that have been widely used to prove set invariance \cite{aubin2011viability, prajna2007framework} and to design multi-objective or multi-agent control strategies \cite{panagou2013multi, wang2016multi, glotfelter2017nonsmooth}. 

CBFs extend BFs to enforce forward invariance of safe sets for affine control systems. If a CBF satisfies Lyapunov-like conditions, safety is guaranteed \cite{ames2016control}. Combining CBFs with Control Lyapunov Functions (CLFs), the CBF-CLF-QP framework formulates safe, optimal control as a sequence of Quadratic Programs (QPs) \cite{ames2012control, ames2016control}. While initially limited to constraints with relative degree one, extensions like Exponential CBFs \cite{nguyen2016exponential} and High-Order CBFs (HOCBFs) \cite{xiao2021high} now accommodate higher-order constraints. CBF-CLF-QP has been successfully applied in various domains, including rehabilitation \cite{isaly2020zeroing}, adaptive cruise control \cite{liu2023auxiliary,liu2025auxiliary}, humanoid locomotion \cite{khazoom2022humanoid}, and obstacle avoidance involving complex environments \cite{liu2024iterative,liu2024safety}.

Despite these advances, the CBF-CLF-QP framework requires smooth CBFs to compute Lie derivatives, and in practice also relies on smooth control inputs to mitigate chattering phenomena \cite{huber2016implicit, utkin2006chattering, xiao2021high2}. Building on this motivation, several recent works have focused on constructing smooth CBFs as the foundation for safety filters, with the aim of producing smooth control inputs. For instance, in \cite{taylor2022safe}, the authors introduced a barrier backstepping methodology for systems in strict-feedback form, which enables the recursive construction of smooth composite CBFs by differentiating through intermediate virtual controllers, thereby maintaining the continuity and differentiability of the resulting control law. Similarly, \cite{molnar2021model} applied these principles to robotic systems by leveraging reduced-order models, facilitating the design of smooth, real-time implementable CBFs for high-dimensional platforms such as legged robots and aerial vehicles. Additionally, \cite{cohen2023characterizing} provided a rigorous framework for designing smooth safety filters using the implicit function theorem, formally ensuring the existence of differentiable control laws that enforce safety constraints while minimizing deviation from nominal performance. However, it is generally impossible to guarantee smoothness in CBF-based formulations, and moreover a smooth control input does not by itself ensure bounded variation. This can still result in abrupt input changes, which accelerate actuator wear and undermine the validity of system modeling assumptions.


From a theoretical perspective, what is fundamentally required is Lipschitz continuity, which ensures well-posed closed-loop dynamics and underlies safety guarantees. Lipschitz continuity also provides a uniform bound on the rate of input variation, thereby eliminating abrupt changes in the control input. 
This paper introduces a novel class of Filtered Control Barrier Functions (FCBFs), which augment existing HOCBF frameworks with an auxiliary dynamic system acting as an input regularization filter. This auxiliary system takes the original (unfiltered) control input from standard HOCBF formulations and outputs a  Lipschitz continuous filtered control input for the underlying system. The FCBFs are specifically constructed to evolve along the dynamics of this auxiliary system, ensuring that the filtered input respects control bounds while still guaranteeing safety—all without modifying the original safety constraints. The resulting formulation remains a single QP, maintaining computational efficiency and real-time capability. 

The remainder of the article is organized as follows. In
Sec. \ref{sec:Preliminaries}, we provide definitions and preliminaries of HOCBFs and CLFs, and then formulates the problem and outlines the proposed approach in Sec. \ref{sec:Problem Formulation and Approach}. The proposed FCBFs and their theoretical properties are introduced in Sec. \ref{sec:FCBF} followed by
simulations in Sec. \ref{sec:Case Study and Simulations}. We conclude the paper and discuss directions for future work in Sec. \ref{sec:conclusion}.

\section{Definitions and Preliminaries}
\label{sec:Preliminaries}

Consider an affine control system of the form
\begin{equation}
\label{eq:affine-control-system}
\dot{\boldsymbol{x}}=f(\boldsymbol{x})+g(\boldsymbol{x})\boldsymbol{u},
\end{equation}
 where $\boldsymbol{x}\in \mathbb{R}^{n}, f:\mathbb{R}^{n}\to\mathbb{R}^{n}$ and $g:\mathbb{R}^{n}\to\mathbb{R}^{n\times q}$ are locally Lipschitz, and $\boldsymbol{u}\in \mathcal U\subset \mathbb{R}^{q}$, where $\mathcal U$ denotes the control limitation set, which is assumed to be in the form: 
\begin{equation}
\label{eq:control-constraint}
\mathcal U \coloneqq \{\boldsymbol{u}\in \mathbb{R}^{q}:\boldsymbol{u}_{min}\le \boldsymbol{u} \le \boldsymbol{u}_{max} \}, 
\end{equation}
with $\boldsymbol{u}_{min},\boldsymbol{u}_{max}\in \mathbb{R}^{q}$ (vector inequalities are interpreted componentwise). We assume that no component of $\boldsymbol{u}_{min}$ and $\boldsymbol{u}_{max}$ can be infinite. 

\begin{definition}[Lipschitz continuity~\cite{rockafellar1998variational}]
\label{def:Lipschitz continuity}
A function $h: \mathbb{R}^n \to \mathbb{R}^m$ is said to be Lipschitz continuous on a set $\mathcal{D} \subseteq \mathbb{R}^n$ if there exists a constant $L \geq 0$ such that
\begin{equation}
\| h(\mathbf{x}) - h(\mathbf{y}) \| \leq L \| \mathbf{x} - \mathbf{y} \|, \quad \forall \mathbf{x},\mathbf{y} \in \mathcal{D}.
\end{equation}
The smallest such $L$ is called the Lipschitz constant of $h$.
\end{definition}
 Unlike $C^1$ smoothness, which typically refers to continuity and differentiability, Lipschitz continuity does not require derivatives to exist everywhere. 
It only guarantees that the function’s rate of change is bounded, which is weaker than differentiability but stronger than mere continuity.
\begin{definition}[Class $\kappa$ function~\cite{Khalil:1173048}]
\label{def:class-k-f}
A continuous function $\alpha:[0,a)\to[0,+\infty],a>0$ is called a class $\kappa$ function if it is strictly increasing and $\alpha(0)=0.$
\end{definition}

\begin{definition}
\label{def:forward-inv}
A set $\mathcal C\subset \mathbb{R}^{n}$ is forward invariant for system \eqref{eq:affine-control-system} if its solutions for some $\boldsymbol{u} \in \mathcal U$ starting from any $\boldsymbol{x}(0) \in \mathcal C$ satisfy $\boldsymbol{x}(t) \in \mathcal C, \forall t \ge 0.$
\end{definition}

\begin{definition}
\label{def:relative-degree}
The relative degree of a differentiable function $b:\mathbb{R}^{n}\to\mathbb{R}$ is the minimum number of times we need to differentiate it along dynamics \eqref{eq:affine-control-system} until any component of $\boldsymbol{u}$ explicitly shows in the corresponding derivative. 
\end{definition}
\begin{lemma}[Comparison Lemma, e.g., \cite{Khalil:1173048}]
\label{lem:for-invariance}
Let $y:\mathbb{R}^{n} \to \mathbb{R}$ be a continuously differentiable function and let $\boldsymbol{x}(t)$ be a trajectory of system~\eqref{eq:affine-control-system} over $t \in [t_0, t_1]$. 
If the time function $y(\boldsymbol{x}(t))$ satisfies
$\dot{y}(\boldsymbol{x}(t)) \geq -\alpha(y(\boldsymbol{x}(t)))$ for all $t \in [t_0, t_1]$, 
where $\alpha$ is a class $\kappa$ function of its argument, and $y(\boldsymbol{x}(t_0)) \geq 0$, 
then $y(\boldsymbol{x}(t)) \geq 0$ for all $t \in [t_0, t_1]$.
\end{lemma}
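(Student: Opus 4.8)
The plan is to reduce the set-invariance claim to a scalar differential inequality and then close it with an elementary ``last exit time'' argument. Set $y(t) \coloneqq b(\boldsymbol{x}(t))$ on $[t_0,t_1]$; since $b \in C^1$ and $\boldsymbol{x}(\cdot)$ is a trajectory of \eqref{eq:affine-control-system}, the map $y$ is differentiable (as the statement presupposes via the symbol $\dot b(\boldsymbol{x}(t))$) with $\dot y(t) = \dot b(\boldsymbol{x}(t))$, so the hypotheses become $\dot y(t) \ge -\alpha(y(t))$ for all $t\in[t_0,t_1]$ and $y(t_0) \ge 0$, and the goal is to show $y(t) \ge 0$ on $[t_0,t_1]$, i.e.\ the trajectory never leaves $\{\boldsymbol{x} : b(\boldsymbol{x}) \ge 0\}$ (the conclusion behind forward invariance in the sense of Def.~\ref{def:forward-inv}). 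One preliminary remark: for $\dot y \ge -\alpha(y)$ to be meaningful at any time where $y$ is negative, $\alpha$ has to be read as an \emph{extended} class $\mathcal{K}$ function, strictly increasing through the origin on an interval that also contains negative values (cf.\ Def.~\ref{def:class-k-f}); this convention is what the argument below exploits, since it yields $\alpha(s) < 0$ for $s < 0$.

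The core step is a proof by contradiction. Suppose $y(t') < 0$ for some $t' \in (t_0,t_1]$ and set $\tau \coloneqq \sup\{\, t \in [t_0,t'] : y(t) \ge 0 \,\}$; this is well defined because $t_0$ lies in the set, and $\tau < t'$ because $y(t') < 0$. Continuity of $y$ gives $y(\tau) \ge 0$ (a limit of nonnegative values) and $y(\tau) \le 0$ (the right limit of the values $y(t)$ for $t \in (\tau,t']$, all negative by maximality of $\tau$), hence $y(\tau) = 0$. On $(\tau,t']$ one has $y(t) < 0$, so $\alpha(y(t)) < \alpha(0) = 0$ and therefore $\dot y(t) \ge -\alpha(y(t)) > 0$; thus $y$ is strictly increasing on $[\tau,t']$, which forces $y(t') > y(\tau) = 0$ and contradicts $y(t') < 0$. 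Hence $y(t) \ge 0$ throughout $[t_0,t_1]$.

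An equivalent route is the comparison lemma: let $z(\cdot)$ be a maximal solution of $\dot z = -\alpha(z)$ with $z(t_0) = y(t_0) \ge 0$; the same sign argument applied to $z$ itself gives $z(t) \ge 0$, and the comparison lemma then yields $y(t) \ge z(t) \ge 0$. I would nonetheless prefer the direct argument, which avoids solving any differential equation.

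The main --- and essentially the only --- obstacle is the low regularity of $\alpha$: as merely a class $\mathcal{K}$ function it need not be locally Lipschitz (e.g.\ $\alpha$ with square-root-type behavior at the origin), so $\dot z = -\alpha(z)$ may fail to have a unique solution and the textbook comparison lemma does not apply verbatim. The direct ``last exit time'' argument sidesteps this completely: it solves no ODE and uses only continuity of $y$, the sign of $\alpha$ away from the origin, and the hypothesis inequality. The remaining ingredients --- fixing the domain convention for $\alpha$ and the elementary topology of the supremum $\tau$ --- are routine.
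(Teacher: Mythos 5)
Your argument is correct, but note that the paper itself offers no proof of Lemma~\ref{lem:for-invariance}: it is imported verbatim from \cite{glotfelter2017nonsmooth}, so there is no in-paper derivation to compare against. The standard proof in that literature goes through the comparison lemma, bounding $b(\boldsymbol{x}(t))$ below by the maximal solution of $\dot z=-\alpha(z)$, $z(t_0)=b(\boldsymbol{x}(t_0))\ge 0$, and invoking uniqueness/maximality to keep $z\ge 0$; your ``last exit time'' argument replaces this with an elementary contradiction that uses only continuity of $t\mapsto b(\boldsymbol{x}(t))$, the sign of $\alpha$ off the origin, and the differential inequality, and it correctly sidesteps the genuine technical weakness of the comparison route (a class $\mathcal{K}$ function need not be locally Lipschitz, so $\dot z=-\alpha(z)$ may lack unique solutions). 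The steps all check out: the set $\{t\in[t_0,t']:b(\boldsymbol{x}(t))\ge 0\}$ is closed and nonempty, so $\tau$ is attained with $b(\boldsymbol{x}(\tau))=0$; strict positivity of $\dot b$ on $(\tau,t']$ plus the mean value theorem gives the contradiction. You are also right to flag the domain issue explicitly: Def.~\ref{def:class-k-f} defines $\alpha$ only on $[0,a)$, so the hypothesis $\dot b\ge-\alpha(b)$ is literally ill-posed once $b<0$ unless $\alpha$ is read as an \emph{extended} class $\mathcal{K}$ function; making that convention explicit (and exploiting $\alpha(s)<0$ for $s<0$) is exactly what closes the argument, and it is the same convention the cited reference relies on implicitly.
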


In this paper,  a \textbf{safety requirement} is defined as $b(\boldsymbol{x})\ge0$, and \textbf{safety} is the forward invariance of the set $\mathcal C\coloneqq \{\boldsymbol{x}\in\mathbb{R}^{n}:b(\boldsymbol{x})\ge 0\}$. The relative degree of function $b$ is also referred to as the relative degree of safety requirement $b(\boldsymbol{x}) \ge 0$. 

\subsection{High-Order Control Barrier Functions (HOCBFs)}

For a requirement $b(\boldsymbol{x})\ge0$ with relative degree $m$ and $\psi_{0}(\boldsymbol{x})\coloneqq b(\boldsymbol{x}),$ we define a sequence of functions $\psi_{i}:\mathbb{R}^{n}\to\mathbb{R},\ i\in \{1,...,m\}$ as
\begin{equation}
\label{eq:sequence-f1}
\psi_{i}(\boldsymbol{x})\coloneqq\dot{\psi}_{i-1}(\boldsymbol{x})+\alpha_{i}(\psi_{i-1}(\boldsymbol{x})),\ i\in \{1,...,m\}, 
\end{equation}
where $\alpha_{i}(\cdot ),\ i\in \{1,...,m\}$ denotes a $(m-i)^{th}$ order differentiable class $\kappa$ function. We further set up a sequence of sets $\mathcal C_{i}$ based on \eqref{eq:sequence-f1} as
\begin{equation}
\label{eq:sequence-set1}
\mathcal C_{i}\coloneqq \{\boldsymbol{x}\in\mathbb{R}^{n}:\psi_{i}(\boldsymbol{x})\ge 0\}, \ i\in \{0,...,m-1\}. 
\end{equation}
\begin{definition}[HOCBF~\cite{xiao2021high}]
\label{def:HOCBF}
Let $\psi_{i}(\boldsymbol{x}),\ i\in \{1,...,m\}$ be defined by \eqref{eq:sequence-f1} and $\mathcal C_{i},\ i\in \{0,...,m-1\}$ be defined by \eqref{eq:sequence-set1}. A function $b:\mathbb{R}^{n}\to\mathbb{R}$ is a High-Order Control Barrier Function (HOCBF) with relative degree $m$ for system \eqref{eq:affine-control-system} if there exist $(m-i)^{th}$ order differentiable class $\kappa$ functions $\alpha_{i},\ i\in \{1,...,m\}$ such that
\begin{equation}
\label{eq:highest-HOCBF}
\begin{split}
\sup_{\boldsymbol{u}\in \mathcal U}[L_{f}^{m}b(\boldsymbol{x})+L_{g}L_{f}^{m-1}b(\boldsymbol{x})\boldsymbol{u}+O(b(\boldsymbol{x}))
+\\
\alpha_{m}(\psi_{m-1}(\boldsymbol{x}))]\ge 0,
\end{split}
\end{equation}
$\forall \boldsymbol{x}\in \mathcal C_{0}\cap...\cap \mathcal C_{m-1},$ where $L_{f}^{m}$ denotes the $m^{th}$ Lie derivative along $f$ and $L_{g}$ denotes the matrix of Lie derivatives along the columns of $g$; 
$O(\cdot)=\sum_{i=1}^{m-1}L_{f}^{i}(\alpha_{m-1}\circ\psi_{m-i-1})(\boldsymbol{x})$ contains the remaining Lie derivatives along $f$ with degree less than or equal to $m-1$. $\psi_{i}(\boldsymbol{x})\ge0$ is referred to as the $i^{\text{th}}$-order HOCBF inequality (constraint in optimization). We assume that $L_{g}L_{f}^{m-1}b(\boldsymbol{x})\boldsymbol{u}\ne0$ on the boundary of set $\mathcal C_{0}\cap\ldots\cap \mathcal C_{m-1}.$ 
\end{definition}

\begin{theorem}[Safety Guarantee~\cite{xiao2021high}]
\label{thm:safety-guarantee}
Given a HOCBF $b(\boldsymbol{x})$ from Def. \ref{def:HOCBF} with corresponding sets $\mathcal{C}_{0}, \dots,\mathcal {C}_{m-1}$ defined by \eqref{eq:sequence-set1}, if $\boldsymbol{x}(0) \in \mathcal {C}_{0}\cap \dots \cap \mathcal {C}_{m-1},$ then any Lipschitz controller $\boldsymbol{u}$ that satisfies the inequality in \eqref{eq:highest-HOCBF}, $\forall t\ge 0$ renders $\mathcal {C}_{0}\cap \dots \cap \mathcal {C}_{m-1}$ forward invariant for system \eqref{eq:affine-control-system}, $i.e., \boldsymbol{x} \in \mathcal {C}_{0}\cap \dots \cap \mathcal {C}_{m-1}, \forall t\ge 0.$
\end{theorem}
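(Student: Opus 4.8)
The plan is to convert the single algebraic inequality \eqref{eq:highest-HOCBF} into a finite chain of scalar differential inequalities, one for each function $\psi_i$ in the HOCBF hierarchy, and then apply Lemma~\ref{lem:for-invariance} repeatedly, peeling off one layer of the hierarchy per step. First I would fix a Lipschitz controller $\boldsymbol{u}$ satisfying \eqref{eq:highest-HOCBF} for all $t\ge 0$. Since $f$ and $g$ are locally Lipschitz and $\boldsymbol{u}$ is Lipschitz, the closed-loop vector field $f(\boldsymbol{x})+g(\boldsymbol{x})\boldsymbol{u}$ is locally Lipschitz, so there is a unique maximal solution $\boldsymbol{x}(t)$ through $\boldsymbol{x}(0)$; every statement below is made along this trajectory.

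The key identification is the next step: by the recursive definition \eqref{eq:sequence-f1}, differentiating $\psi_{m-1}$ along \eqref{eq:affine-control-system} yields $\dot{\psi}_{m-1}(\boldsymbol{x}) = L_f^m b(\boldsymbol{x}) + L_g L_f^{m-1} b(\boldsymbol{x})\boldsymbol{u} + O(b(\boldsymbol{x}))$, so the bracketed expression in \eqref{eq:highest-HOCBF} is exactly $\dot{\psi}_{m-1}(\boldsymbol{x}) + \alpha_m(\psi_{m-1}(\boldsymbol{x})) = \psi_m(\boldsymbol{x})$. Hence the controller enforces $\psi_m(\boldsymbol{x}(t)) \ge 0$ for all $t \ge 0$. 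I would then run a finite downward induction on $i = m-1, m-2, \dots, 0$ with hypothesis ``$\psi_i(\boldsymbol{x}(t)) \ge 0$ for all $t \ge 0$.'' For the base case and each step, the relation $\psi_{i+1} = \dot{\psi}_i + \alpha_{i+1}(\psi_i) \ge 0$ along the trajectory rearranges to $\dot{\psi}_i(\boldsymbol{x}(t)) \ge -\alpha_{i+1}(\psi_i(\boldsymbol{x}(t)))$; since $\boldsymbol{x}(0) \in \mathcal{C}_0 \cap \dots \cap \mathcal{C}_{m-1}$ gives $\psi_i(\boldsymbol{x}(0)) \ge 0$, and $\psi_i$ is continuously differentiable (guaranteed by the stated differentiability orders of the $\alpha_j$), Lemma~\ref{lem:for-invariance} yields $\psi_i(\boldsymbol{x}(t)) \ge 0$ for all $t$, i.e.\ $\boldsymbol{x}(t) \in \mathcal{C}_i$. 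Taking $i=0$ in particular gives $b(\boldsymbol{x}(t)) \ge 0$, and intersecting over $i \in \{0,\dots,m-1\}$ shows that $\mathcal{C}_0 \cap \dots \cap \mathcal{C}_{m-1}$ is forward invariant for \eqref{eq:affine-control-system}.

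The main obstacle I anticipate is not the induction itself, which is routine, but the completeness of the trajectory: Lemma~\ref{lem:for-invariance} is stated on a compact interval $[t_0,t_1]$, so to conclude ``for all $t\ge 0$'' one must argue that the maximal solution does not escape in finite time. This is handled by running the chain of invariances on every $[0,T]$ contained in the maximal interval of existence, which confines $\boldsymbol{x}(t)$ to $\mathcal{C}_0 \cap \dots \cap \mathcal{C}_{m-1}$, and then combining $\boldsymbol{u}\in\mathcal U$ (bounded control) with local Lipschitzness of $f,g$ to rule out finite-time blow-up, after which one lets $T$ grow. A secondary point worth checking carefully is that the assumption $L_g L_f^{m-1} b(\boldsymbol{x})\boldsymbol{u}\ne 0$ on the boundary of $\mathcal{C}_0\cap\dots\cap\mathcal{C}_{m-1}$ together with the differentiability orders of the $\alpha_i$ are precisely what make each $\psi_i$ differentiable and the supremum in \eqref{eq:highest-HOCBF} attainable, so that a Lipschitz $\boldsymbol{u}$ realizing the inequality indeed exists on the relevant set.
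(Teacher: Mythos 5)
Your proof is correct and takes essentially the same route as the paper: identify the bracketed quantity in \eqref{eq:highest-HOCBF} with $\psi_m=\dot{\psi}_{m-1}+\alpha_m(\psi_{m-1})$ and then apply Lemma~\ref{lem:for-invariance} repeatedly to peel off $\psi_{m-1},\dots,\psi_0$, which is exactly the inductive argument the paper uses for the analogous Thm.~\ref{thm:safety-guarantee-2} (Thm.~\ref{thm:safety-guarantee} itself is cited from \cite{xiao2021high} without proof). Your extra care about the maximal interval of existence and finite-time escape is a refinement the paper glosses over but does not alter the approach.
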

The Lipschitz controller requirement in Thm.~\ref{thm:safety-guarantee} ensures that the closed-loop vector field is well-posed, guaranteeing the existence and uniqueness of trajectories for system~\eqref{eq:affine-control-system}.
\begin{definition}[CLF~\cite{ames2012control}]
\label{def:control-l-f}
A continuously differentiable function $V:\mathbb{R}^{n}\to\mathbb{R}$ is an exponentially stabilizing Control Lyapunov Function (CLF) for system \eqref{eq:affine-control-system} if there exist constants $c_{1}>0, c_{2}>0,c_{3}>0$ such that for $\forall \boldsymbol{x} \in \mathbb{R}^{n}, c_{1}\left \|  \boldsymbol{x} \right \| ^{2} \le V(\boldsymbol{x}) \le c_{2}\left \|  \boldsymbol{x} \right \| ^{2}$ and
\begin{equation}
\label{eq:clf}
\inf_{\boldsymbol{u}\in \mathcal U}[L_{f}V(\boldsymbol{x})+L_{g}V(\boldsymbol{x})\boldsymbol{u}+c_{3}V(\boldsymbol{x})]\le 0.
\end{equation}
\end{definition}

Several works (e.g., \cite{nguyen2016exponential}, \cite{xiao2021high}) formulate safety-critical optimization problems by combining HOCBFs \eqref{eq:highest-HOCBF} with quadratic cost objectives to handle systems with high relative degree. HOCBFs ensure forward invariance of safety-related sets, thereby guaranteeing safety, while CLFs \eqref{eq:clf} can also be included as soft constraints to enforce exponential convergence of desired states \cite{xiao2021high, liu2025auxiliary}. In these formulations, control inputs are treated as decision variables in a QP, solved at each discrete time step with the current state held fixed. The resulting optimal control is applied at the beginning of each interval and kept constant, while system dynamics \eqref{eq:affine-control-system} are used to update the state. Note that in Thm. \ref{thm:safety-guarantee}, we assume the input is a Lipschitz continuous controller. However, since $\boldsymbol{u}$ is a decision variable rather than a closed-form expression, there is no analytical guarantee that it is Lipschitz continuous. As a result, the control input may vary abruptly over time and can be difficult to regulate.  In this paper, we provide a theoretical guarantee of Lipschitz continuity for the control input, ensuring bounded variation even under piecewise QP-based implementations. Nevertheless, due to the inherent discontinuities introduced by the piecewise update scheme, the proposed Lipschitz controller is not fully smooth. Notably, the proposed method can be applied to any system where conventional CBFs are valid, making it widely applicable, easy to integrate into existing control frameworks \cite{taylor2022safe, molnar2021model,cohen2023characterizing} that focus on improving smoothness.

\section{Problem Formulation and Approach}
\label{sec:Problem Formulation and Approach}

Our goal is to generate a control strategy for system \eqref{eq:affine-control-system} that ensures convergence, minimizes energy, satisfies safety, and respects input constraints. 

\textbf{Objective:} We consider the cost  
\begin{equation}
\label{eq:cost-function-1}
\begin{split}
 J(\boldsymbol{u}(t))=\int_{0}^{T} 
 \| \boldsymbol{u}(t) \| ^{2}dt+p\left \| \boldsymbol{x}(T)-\boldsymbol{x}_{e} \right \| ^{2},
\end{split}
\end{equation}
where $\left \| \cdot \right \|$ denotes the 2-norm of a vector, and $T>0$ is the ending time; $p>0$ denotes a weight factor and $\boldsymbol{x}_{e} \in \mathbb{R}^{n}$ is a desired state, which is assumed to be an equilibrium for system \eqref{eq:affine-control-system}. $p\left \| \boldsymbol{x}(T)-\boldsymbol{x}_{e} \right \| ^{2}$ enforces state convergence.

\textbf{Safety Requirement:} System \eqref{eq:affine-control-system} should always satisfy one or more safety requirements of the form: 
\begin{equation}
\label{eq:Safety constraint}
b(\boldsymbol{x})\ge 0, \boldsymbol{x} \in \mathbb{R}^{n}, \forall t \in [0, T],
\end{equation}
where $b:\mathbb{R}^{n}\to\mathbb{R}$ is assumed to be a continuously differentiable equation. 

\textbf{Control Limitations:} The controller $\boldsymbol{u}$ should always satisfy \eqref{eq:control-constraint} for all $t \in [0, T].$

A control policy is \textbf{feasible} if \eqref{eq:Safety constraint} and \eqref{eq:control-constraint} are satisfied $\forall t \in [0, T].$ In this paper, we consider the following problem:

\begin{problem}
\label{prob:smooth-prob}
Find a feasible Lipschitz continuous control policy for system \eqref{eq:affine-control-system} such that cost \eqref{eq:cost-function-1} is minimized.
\end{problem}

\textbf{Approach:} 
To solve Problem 1, we build on the CBF-based QP framework introduced at the end of Sec. \ref{sec:Preliminaries}. A HOCBF $b_{1}(\boldsymbol{x})$ is used to enforce the safety constraint in \eqref{eq:Safety constraint}, while a CLF handles the terminal state constraint in the cost function \eqref{eq:cost-function-1}. To ensure that the control input involved in the $m^{\text{th}}$-order constraint $\psi_{m}(\boldsymbol{x},\boldsymbol{u})\ge 0$ is Lipschitz continuous as required by Thm. \ref{thm:safety-guarantee}, we introduce a filtered input $\boldsymbol{u}_{f}$, generated by an auxiliary dynamic system that is an input regularization filter by itself. The filtered input will replace original input $\boldsymbol{u}$ in $\psi_{m}$. We then define a new HOCBF, $b_{2}(\boldsymbol{x},\boldsymbol{u}_{f})=\psi_{m}(\boldsymbol{x},\boldsymbol{u}_{f})$, where $\boldsymbol{u}_{f}$ is treated as part of the state of the auxiliary system. This formulation leads to a set of high-order constraints that jointly guarantee safety, input bounds \eqref{eq:control-constraint}, and Lipschitz continuity of $\boldsymbol{u}_{f}$. We refer to $b_{2}$ as a Filtered CBF (FCBF). Since its structure is analogous to standard HOCBFs, the overall formulation remains a QP, preserving both structural simplicity and computational efficiency. While the resulting control input may exhibit small oscillations, it remains bounded and Lipschitz continuous. The Lipschitz constant defined in Def. \ref{def:Lipschitz continuity} can be tuned via design hyperparameters to further reduce these oscillations for practical implementation and theoretical guarantees.

\section{Filtered Control Barrier Functions}
\label{sec:FCBF}
\subsection{Motivating Example}
\label{subsec:Motivation Example}

Consider a simplified unicycle model defined in the form:
\begin{equation}
\label{eq:simplified unicycle model}
\begin{split}
 \dot{x}=v\cos (\theta), \dot{y}=v\sin (\theta),
 \dot{\theta}=u,
\end{split}
\end{equation}
where $v$ is a constant linear speed, $\boldsymbol{x}=[x,y,\theta]^{\top}$, $[x,y]^{\top}\in \mathbb{R}^{2}$ denotes the 2-D location of the vehicle, $\theta \in \mathbb{R}$ denotes the heading, and $u$ is the control input corresponding to steering wheel angle. Suppose system \eqref{eq:simplified unicycle model} has to satisfy a safety constraint:
\begin{equation}
\label{eq:safety constraint 1}
\begin{split}
(x-x_{o})^{2}+(y-y_{o})^{2}-r_{o}^{2}\ge 0,
\end{split}
\end{equation}
where $[x_{o}, y_{o}]^{\top}\in \mathbb{R}^{2}$ denotes the 2-D location of a circular obstacle, and $r_{o}>0$ denotes its radius. The safety constraint in \eqref{eq:safety constraint 1} has relative degree two with respect to the system dynamics in \eqref{eq:simplified unicycle model}. Therefore, a HOCBF with $m=2$, as defined in Def. \ref{def:HOCBF}, can be used to enforce this constraint. We choose the class $\kappa$ functions $\alpha_{1}, \alpha_{2}$ as linear functions and the corresponding HOCBF constraint \eqref{eq:highest-HOCBF} becomes:
\begin{equation}
\label{eq:h-o safety constraint 1}
\begin{split}
(2(y-y_{o})v\cos{\theta}-2(x-x_{o})v\sin{\theta})u+2v^{2}+\\(k_{1}+k_{2})\dot{b}(\boldsymbol{x})+k_{1}k_{2}b(\boldsymbol{x})\ge0,
\end{split}
\end{equation}
where $b(\boldsymbol{x})=(x-x_{o})^{2}+(y-y_{o})^{2}-r_{o}^{2}$ and $\dot{b}(\boldsymbol{x})=2(x-x_{o})v\cos(\theta)+2(y-y_{o})v\sin(\theta)$. Note that the expression $\frac{2v^{2}+\\(k_{1}+k_{2})\dot{b}(\boldsymbol{x})+k_{1}k_{2}b(\boldsymbol{x})}{2(y-y_{o})v\cos{\theta}-2(x-x_{o})v\sin{\theta}}$ can be differentiable, yet its derivative can be unbounded. This can cause the control input $u$ to vary rapidly over time in order to satisfy the high-order safety constraint \eqref{eq:h-o safety constraint 1}, potentially leading to abrupt or nonsmooth behavior. In such cases, the assumption in Thm. \ref{thm:safety-guarantee} that $u$ is Lipschitz continuous may no longer hold. This problem is more serious for systems with noisy dynamics. In this work, we address this issue by proposing a method that ensures the Lipschitz continuity of the control input $u$.

\subsection{Filtered Control Barrier Functions (FCBF)}

Motivated by the simplified unicycle example in Sec. \ref{subsec:Motivation Example}, 
Given a HOCBF $b:\mathbb{R}^{n}\to\mathbb{R}$ with relative degree $m$ for system \eqref{eq:affine-control-system} and control limitations \eqref{eq:control-constraint}, we aim to develop a method that ensures the control input is Lipschitz continuous, allowing it to change without abrupt variations over time while also satisfying control limitations. In order to achieve this, we need to define the filtered control input $\boldsymbol{u}_f$, which will be used to replace the original input in the dynamic system \eqref{eq:affine-control-system}, i.e., $\bm u = \bm u_f$. We also need to impose constraints on the time derivative of the filtered input $\boldsymbol{u}_f$. Note that the authors of \cite{dacs2024rollover} impose constraints on the input by introducing a feedback law and a differentiator, while the authors of \cite{ames2020integral} impose constraints on the first-order time derivative of the input via a feedback law and an auxiliary input. However, such methods cannot control higher-order time derivatives of the input, nor can they guarantee its Lipschitz continuity. To overcome these shortcomings, we first construct an auxiliary dynamical system that introduces higher-order time derivatives of $\boldsymbol{u}_f$. Since this auxiliary system is designed specifically to apply bounded-rate regularization on $\boldsymbol{u}_f$ over time, we refer to it as an input regularization filter:
\begin{equation}
\label{eq:smoothness filter}
\dot{\boldsymbol{\pi}}=F(\boldsymbol{\pi})+G(\boldsymbol{\pi})\boldsymbol{\nu},
\end{equation}
where $\boldsymbol{\pi}(t)\coloneqq [\pi_{1}(t),\dots,\pi_{m_{a}q}(t)]^{^{\top}}\in \mathbb{R}^{m_{a}q}$  denotes an auxiliary state with $\pi_{j}(t)\in \mathbb{R}, j \in \{1,...,m_{a}q\}$. $\nu \in \mathbb{R}^{q}$ is an auxiliary input for \eqref{eq:smoothness filter}, $F:\mathbb{R}^{m_{a}q}\to\mathbb{R}^{m_{a}q}$ and $G:\mathbb{R}^{m_{a}q}\to\mathbb{R}^{m_{a}q\times q}$ are locally Lipschitz. For simplicity, we just build up the connection between auxiliary variables and the system as $\boldsymbol{\pi}_{1:q}=\boldsymbol{u}_f,\boldsymbol{\pi}_{(q+1):2q}=\boldsymbol{u}^{(1)}_f,\dots, \boldsymbol{\pi}_{((m_{a}-1)q+1):m_{a}q}=\boldsymbol{u}^{(m_{a}-1)}_f,\boldsymbol{u}^{(m_{a})}_f=\boldsymbol{\nu}$. $\boldsymbol{\pi}_{i:j}$ refers to the entries of vector $\boldsymbol{\pi}$ from index $i$ to $j$ and $\boldsymbol{u}_f^{(i)}$ represents the $i^{\text{th}}$ derivative of $\boldsymbol{u}_f$ with respect to time $t$. A more general construction of system \eqref{eq:smoothness filter} requires that the auxiliary state $\boldsymbol{u}_{f}$ has relative degree $m_{a}$ with respect to system \eqref{eq:smoothness filter}. In order to guarantee safety, we first define $\psi_{0,f}(\boldsymbol{x},\boldsymbol{\pi})\coloneqq \psi_{m}(\boldsymbol{x},\boldsymbol{u}_f)$ based on \eqref{eq:highest-HOCBF}, where the relative degree of $\psi_{0,f}(\boldsymbol{x},\boldsymbol{\pi})$ with respect to system \eqref{eq:smoothness filter} is $m_{a}$. We then define a sequence of functions $\psi_{i,f}:\mathbb{R}^{n}\to\mathbb{R},\ i\in \{1,...,m_{a}\}$ as
\begin{equation}
\label{eq:sequence-f2}
\begin{split}
\psi_{i,f}(\boldsymbol{x},\boldsymbol{\pi})\coloneqq\dot{\psi}_{i-1,f}(\boldsymbol{x},\boldsymbol{\pi})+\alpha_{i}(\psi_{i-1,f}(\boldsymbol{x},\boldsymbol{\pi})),
\end{split}
\end{equation}
where $\alpha_{i}(\cdot ),\ i\in \{1,...,m_{a}\}$ denotes a $(m_{a}-i)^{th}$ order differentiable class $\kappa$ function. We further set up a sequence of sets $\mathcal C_{i,f}, i\in \{0,...,m_{a}-1\}$ based on \eqref{eq:sequence-f2} as
\begin{equation}
\label{eq:sequence-set2}
\begin{split}
\mathcal C_{i,f}\coloneqq \{(\boldsymbol{x},\boldsymbol{\pi})\in\mathbb{R}^{n+m_{a}q}:\psi_{i,f}(\boldsymbol{x},\boldsymbol{\pi})\ge 0\} . 
\end{split}
\end{equation}

\begin{definition}[FCBF]
\label{def:FCBF}
Let $\psi_{i,f}(\boldsymbol{x}),\ i\in \{1,...,m_{a}\}$ be defined by \eqref{eq:sequence-f2}, $\mathcal C_{i,f},\ i\in \{0,...,m_{a}-1\}$ be defined by \eqref{eq:sequence-set2}, and a function $b(\boldsymbol{x}):\mathbb{R}^{n}\to\mathbb{R}$ is a HOCBF with relative degree $m$ for system \eqref{eq:affine-control-system} defined by Def. \ref{def:HOCBF}. Function $b_{f}(\boldsymbol{x},\boldsymbol{\pi})=\psi_{0,f}(\boldsymbol{x},\boldsymbol{\pi})=\psi_m(\bm x, \bm u_f)$, where $\bm u_f = \bm u$, is a Filtered Control Barrier Function (FCBF) with relative degree $m_{a}$ for system \eqref{eq:smoothness filter} if there exist $(m_{a}-i)^{th}$ order differentiable class $\kappa$ functions $\alpha_{i},\ i\in \{1,...,m_{a}\}$ such that
\begin{equation}
\label{eq:highest-FCBF}
\begin{split}
\sup_{\boldsymbol{\nu}\in \mathbb{R}^{q}}[L_{F}^{m_{a}}b_{f}(\boldsymbol{x},\boldsymbol{\pi})+L_{G}L_{F}^{m_{a}-1}b_{f}(\boldsymbol{x},\boldsymbol{\pi})\boldsymbol{\nu}+O(b_{f}(\boldsymbol{x},\boldsymbol{\pi}))
+\\
\alpha_{m_{a}}(\psi_{m_{a}-1,f}(\boldsymbol{x},\boldsymbol{\pi}))]\ge 0,
\end{split}
\end{equation}
$\forall (\boldsymbol{x},\boldsymbol{\pi})\in \mathcal C_{0,f}\cap...\cap \mathcal C_{m_{a}-1,f},$ where $L_{F}^{m_{a}}$ denotes the $m_{a}^{th}$ Lie derivative along $F$ and $L_{G}$ denotes the matrix of Lie derivatives along the columns of $G$; 
$O(\cdot)=\sum_{i=1}^{m_{a}-1}L_{F}^{i}(\alpha_{m_{a}-1}\circ\psi_{m_{a}-i-1,f})(\boldsymbol{x},\boldsymbol{\pi})$ contains the remaining Lie derivatives along $F$ with degree less than or equal to $m_{a}-1$. $\psi_{i,f}(\boldsymbol{x},\boldsymbol{\pi})\ge0$ is referred to as the $i^{\text{th}}$-order FCBF inequality (constraint in optimization). We assume that $L_{G}L_{F}^{m_{a}-1}b_{f}(\boldsymbol{x},\boldsymbol{\pi})\boldsymbol{\nu}\ne0$ on the boundary of set $\mathcal C_{0,f}\cap\ldots\cap \mathcal C_{m_{a}-1,f}$. 
\end{definition}
\begin{theorem}[Safety Guarantee]
\label{thm:safety-guarantee-2}
Given a FCBF $b_{f}(\boldsymbol{x},\boldsymbol{\pi})$ from Def. \ref{def:FCBF} with corresponding sets $\mathcal{C}_{0,f}, \dots,\mathcal {C}_{m_{a}-1,f}$ defined by \eqref{eq:sequence-set2}, if $(\boldsymbol{x}(0),\boldsymbol{\pi}(0)) \in \mathcal {C}_{0}\cap \dots \cap  \mathcal {C}_{m-1}\cap  \mathcal {C}_{0,f}\cap \dots \cap \mathcal {C}_{m_{a}-1,f},$ then any Lipschitz controller $\boldsymbol{\nu}$ that satisfies the inequality in \eqref{eq:highest-FCBF}, $\forall t\ge 0$ renders $\mathcal {C}_{0}\cap \dots \cap  \mathcal {C}_{m-1}\cap \mathcal {C}_{0,f}\cap \dots \cap \mathcal {C}_{m_{a}-1,f}$ forward invariant for systems \eqref{eq:affine-control-system} and \eqref{eq:smoothness filter}, $i.e., (\boldsymbol{x},\boldsymbol{\pi}) \in \mathcal {C}_{0}\cap \dots \cap  \mathcal {C}_{m-1}\cap \mathcal {C}_{0,f}\cap \dots \cap \mathcal {C}_{m_{a}-1,f}, \forall t\ge 0$.
\end{theorem}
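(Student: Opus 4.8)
The plan is to prove Theorem~\ref{thm:safety-guarantee-2} by two nested invocations of machinery already available: Theorem~\ref{thm:safety-guarantee} (the HOCBF safety guarantee) and Lemma~\ref{lem:for-invariance} (the comparison/invariance lemma). The starting point is the observation that, with $\boldsymbol u=\boldsymbol u_f=\boldsymbol\pi_{1:q}$, the pair $(\boldsymbol x,\boldsymbol\pi)$ obeys the augmented affine control system
\begin{equation*}
\begin{bmatrix}\dot{\boldsymbol x}\\ \dot{\boldsymbol\pi}\end{bmatrix}
=\begin{bmatrix} f(\boldsymbol x)+g(\boldsymbol x)\boldsymbol\pi_{1:q}\\ F(\boldsymbol\pi)\end{bmatrix}
+\begin{bmatrix}\mathbf 0\\ G(\boldsymbol\pi)\end{bmatrix}\boldsymbol\nu
\end{equation*}
with control $\boldsymbol\nu$. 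Since neither the $\boldsymbol x$-block nor the first $m_a-1$ sub-blocks of the $\boldsymbol\pi$-block contain $\boldsymbol\nu$, differentiating $b_f(\boldsymbol x,\boldsymbol\pi)=\psi_m(\boldsymbol x,\boldsymbol u_f)$ along this augmented system produces $\boldsymbol\nu$ for the first time only in the $m_a$-th derivative; that is, $b_f$ is an HOCBF of relative degree $m_a$ for the augmented system in the sense of Def.~\ref{def:HOCBF}, with $\psi_{i,f}$ in \eqref{eq:sequence-f2} and $\mathcal C_{i,f}$ in \eqref{eq:sequence-set2} exactly its associated auxiliary functions and sets (the $L_F,L_G$ in \eqref{eq:highest-FCBF} being shorthand for the corresponding augmented Lie derivatives).

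With this identification in place, the argument proceeds in two stages. First I would record well-posedness: $f,g,F,G$ are locally Lipschitz and $\boldsymbol\nu$ is a Lipschitz controller, so the augmented closed-loop field is locally Lipschitz and the trajectory $(\boldsymbol x(t),\boldsymbol\pi(t))$ is unique; in particular $\boldsymbol u_f(t)=\boldsymbol\pi_{1:q}(t)$ is locally Lipschitz in $t$, so the original closed loop $\dot{\boldsymbol x}=f(\boldsymbol x)+g(\boldsymbol x)\boldsymbol u_f$ is itself well-posed. Stage~1 then applies Theorem~\ref{thm:safety-guarantee} to the augmented system: since $(\boldsymbol x(0),\boldsymbol\pi(0))\in\mathcal C_{0,f}\cap\dots\cap\mathcal C_{m_a-1,f}$ and $\boldsymbol\nu$ satisfies \eqref{eq:highest-FCBF} for all $t\ge0$, the intersection $\mathcal C_{0,f}\cap\dots\cap\mathcal C_{m_a-1,f}$ is forward invariant; in particular $\psi_{0,f}(\boldsymbol x(t),\boldsymbol\pi(t))=\psi_m(\boldsymbol x(t),\boldsymbol u_f(t))\ge0$ for all $t\ge0$.

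Stage~2 is the outer cascade, which is the proof of Theorem~\ref{thm:safety-guarantee} run on the original system. Because $\psi_0=b,\psi_1,\dots,\psi_{m-1}$ depend on $\boldsymbol x$ alone, \eqref{eq:sequence-f1} gives, along the actual trajectory, $\psi_m(\boldsymbol x(t),\boldsymbol u_f(t))=\tfrac{d}{dt}\psi_{m-1}(\boldsymbol x(t))+\alpha_m(\psi_{m-1}(\boldsymbol x(t)))$. Hence Stage~1 yields $\dot\psi_{m-1}(\boldsymbol x(t))\ge-\alpha_m(\psi_{m-1}(\boldsymbol x(t)))$, and since $\psi_{m-1}(\boldsymbol x(0))\ge0$ (from $(\boldsymbol x(0),\boldsymbol\pi(0))\in\mathcal C_{m-1}$), Lemma~\ref{lem:for-invariance} gives $\psi_{m-1}(\boldsymbol x(t))\ge0$ for all $t\ge0$. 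Iterating downward through $\psi_{m-2},\dots,\psi_1,\psi_0=b$ --- at each step using \eqref{eq:sequence-f1}, the nonnegativity of $\psi_{i}$ just established, the initial membership $\boldsymbol x(0)\in\mathcal C_{i-1}$, and Lemma~\ref{lem:for-invariance} --- yields $\boldsymbol x(t)\in\mathcal C_0\cap\dots\cap\mathcal C_{m-1}$ for all $t\ge0$. Combined with the Stage~1 conclusion, the full intersection $\mathcal C_0\cap\dots\cap\mathcal C_{m-1}\cap\mathcal C_{0,f}\cap\dots\cap\mathcal C_{m_a-1,f}$ is forward invariant, which is the claim.

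The step I expect to be the main obstacle is the bookkeeping underlying the first paragraph: rigorously verifying that $b_f=\psi_m(\boldsymbol x,\boldsymbol u_f)$ has relative degree exactly $m_a$ for the augmented system and that the lower-order terms gathered into $O(b_f)$ in \eqref{eq:highest-FCBF} are genuinely $\boldsymbol\nu$-free. This requires tracking how the $\boldsymbol x$-dependence of $\psi_m$ propagates through $m_a$ successive time differentiations and leans on the structural assumption that $\boldsymbol u_f=\boldsymbol\pi_{1:q}$ has relative degree $m_a$ with respect to \eqref{eq:smoothness filter}; once this is settled, Stages~1 and~2 are a direct, if lengthy, reuse of the proof of Theorem~\ref{thm:safety-guarantee}.
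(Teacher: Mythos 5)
Your proposal is correct and follows essentially the same route as the paper: both arguments run the comparison lemma (Lemma~\ref{lem:for-invariance}) cascade twice, first down the filtered chain $\psi_{m_a,f}\to\psi_{0,f}$ and then, using $\psi_{0,f}=\psi_m(\boldsymbol x,\boldsymbol u_f)\ge0$, down the original chain $\psi_{m-1}\to\psi_0=b$. The only difference is presentational --- you package the inner cascade as an invocation of Theorem~\ref{thm:safety-guarantee} on the augmented system, whereas the paper unrolls it directly with Lemma~\ref{lem:for-invariance} --- which is not a genuinely different argument.
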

\begin{proof}
Since $\boldsymbol{\nu}$ is Lipschitz continuous and appears only in the last equation of \eqref{eq:sequence-f2} when taking Lie derivatives of \eqref{eq:sequence-f2}, it follows that 
$\psi_{m_{a},f}(\boldsymbol{x},\boldsymbol{\pi})$ is also Lipschitz continuous. Moreover, all system states in \eqref{eq:affine-control-system} and \eqref{eq:smoothness filter} are continuously differentiable, so $\psi_{1}, \psi_{2},\dots,\psi_{m-1}, \psi_{0,f},\psi_{1,f},\dots, \psi_{m_{a}-1,f}$ are also continuously differentiable. Then, $\psi_{m_{a},f}(\boldsymbol{x},\boldsymbol{\pi})\ge0$ for $\forall t \ge 0$, i.e., $\dot{\psi}_{m_{a}-1,f}(\boldsymbol{x},\boldsymbol{\pi})+\alpha_{m_{a}}(\psi_{m_{a}-1,f}(\boldsymbol{x},\boldsymbol{\pi}))\ge 0$. By Lemma \ref{lem:for-invariance}, since $(\boldsymbol{x}(0),\boldsymbol{\pi}(0)) \in \mathcal {C}_{m_{a}-1,f}$, i.e., $\psi_{m_{a}-1,f}(\boldsymbol{x}(0),\boldsymbol{\pi}(0))\ge0$, we have $\psi_{m_{a}-1,f}(\boldsymbol{x},\boldsymbol{\pi})\ge0,  \forall t\ge 0$. By Lemma \ref{lem:for-invariance}, since $(\boldsymbol{x}(0),\boldsymbol{\pi}(0)) \in \mathcal {C}_{m_{a}-2,f}$,  we have $\psi_{m_{a}-2,f}(\boldsymbol{x},\boldsymbol{\pi})\ge0,  \forall t\ge 0$. Repeatedly, we have $\psi_{0,f}(\boldsymbol{x},\boldsymbol{\pi})\ge0,  \forall t\ge 0$. This is equivalent to have $\psi_{m}(\boldsymbol{x},\boldsymbol{u})=\psi_{m}(\boldsymbol{x},\boldsymbol{u}_{f})\ge0,  \forall t\ge 0$ since we replace $\boldsymbol{u}$ with $\boldsymbol{u}_{f}$. By Lemma \ref{lem:for-invariance}, since $\boldsymbol{x}(0) \in \mathcal {C}_{m-1}$, i.e., $\psi_{m-1}(\boldsymbol{x}(0))\ge0$, we have $\psi_{m-1}(\boldsymbol{x})\ge0,  \forall t\ge 0$. Repeatedly, we have $\psi_{0}(\boldsymbol{x})=b(\boldsymbol{x})\ge0,  \forall t\ge 0$. Therefore, the intersection of sets $\mathcal {C}_{0}, \dots, \mathcal {C}_{m-1}, \mathcal {C}_{0,f}, \dots ,\mathcal {C}_{m_{a}-1,f}$ are forward invariant.
\end{proof}
Some issues still need to be addressed, such as how to ensure the Lipschitz continuity of $\boldsymbol{u}_{f}$ after replacing $\boldsymbol{u}$ with $\boldsymbol{u}_{f}$, and how to guarantee that $\boldsymbol{u}_{f}$ satisfies the control limitations in \eqref{eq:control-constraint}, i.e., $\boldsymbol{u}_{\min}\le \boldsymbol{u}_{f} \le \boldsymbol{u}_{\max}$. To address these issues, we propose the next theorem. 
\begin{theorem}
\label{thm:lipschitz-guarantee}
Consider the auxiliary system defined in \eqref{eq:smoothness filter} with auxiliary state vector \( \boldsymbol{u}_f \), and the control constraint given in \eqref{eq:control-constraint}. For each \( \boldsymbol{u}_{k,f} \), where \( k \in \{1, \dots, q\} \) denotes the \( k^\text{th} \) entry of the vector \( \boldsymbol{u}_f \), define the corresponding control bounds \( \boldsymbol{u}_{k,\min} \) and \( \boldsymbol{u}_{k,\max} \) as the \( k^\text{th} \) entries of \( \boldsymbol{u}_{\min} \) and \( \boldsymbol{u}_{\max} \), respectively. Suppose the functions
\begin{equation}
b_k^{\min}(\boldsymbol{\pi}) = \boldsymbol{u}_{k,f} - \boldsymbol{u}_{k,\min},  
 b_k^{\max}(\boldsymbol{\pi}) = \boldsymbol{u}_{k,\max} - \boldsymbol{u}_{k,f},  
\end{equation}
are HOCBFs of relative degree \( m_a \) for system \eqref{eq:smoothness filter}, and that the sequences of functions \( \psi_{k,i}^{\min}(\boldsymbol{\pi}) \) and \( \psi_{k,i}^{\max}(\boldsymbol{\pi}) \) for \( i \in \{1, \dots, m_a\} \) are defined as in \eqref{eq:sequence-f1}, with corresponding sets \( \mathcal{C}_i^{\min} \), \( \mathcal{C}_i^{\max} \) for \( i \in \{0, \dots, m_a - 1\} \) defined by \eqref{eq:sequence-set1}. If the initial condition satisfies $\boldsymbol{\pi}(0) \in \bigcap_{i=0}^{m_a - 1} \mathcal{C}_i^{\min} \cap \mathcal{C}_i^{\max}$, and there always exists Lipschitz controller $\boldsymbol{\nu}(t)$ to satisfy both \( \psi_{k,m_{a}}^{\min}(\boldsymbol{\pi}) \ge 0\) and \( \psi_{k,m_{a}}^{\max}(\boldsymbol{\pi})\ge 0 \),
then the filtered input \( \boldsymbol{u}_f \) is Lipschitz continuous and satisfies the control input bounds $\boldsymbol{u}_{\min} \le \boldsymbol{u}_f \le \boldsymbol{u}_{\max}$ componentwise.
\end{theorem}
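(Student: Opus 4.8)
The plan is to reduce the statement to applications of the HOCBF safety guarantee (Theorem~\ref{thm:safety-guarantee}) for the auxiliary system~\eqref{eq:smoothness filter}, and then to read off both conclusions from the forward invariance it produces. First I would apply Theorem~\ref{thm:safety-guarantee} to each of the $2q$ barrier functions $b_k^{\min}$ and $b_k^{\max}$, $k\in\{1,\dots,q\}$: by hypothesis each is a HOCBF of relative degree $m_a$, the initial state $\boldsymbol{\pi}(0)$ lies in every $\mathcal C_i^{\min}$ and $\mathcal C_i^{\max}$, and the common Lipschitz controller $\boldsymbol{\nu}(t)$ satisfies the highest-order inequalities $\psi_{k,m_a}^{\min}\ge 0$ and $\psi_{k,m_a}^{\max}\ge 0$. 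As in the proof of Theorem~\ref{thm:safety-guarantee-2}, Lemma~\ref{lem:for-invariance} propagates nonnegativity down each chain, so $\psi_{k,i}^{\min}(\boldsymbol{\pi}(t))\ge 0$ and $\psi_{k,i}^{\max}(\boldsymbol{\pi}(t))\ge 0$ for all $i\in\{0,\dots,m_a-1\}$ and all $t\ge 0$, i.e.\ every $\mathcal C_i^{\min},\mathcal C_i^{\max}$ is forward invariant. The control-bound claim then follows from the level-$0$ sets: $\psi_{k,0}^{\min}(\boldsymbol{\pi}(t))=\boldsymbol{u}_{k,f}(t)-\boldsymbol{u}_{k,\min}\ge 0$ and $\psi_{k,0}^{\max}(\boldsymbol{\pi}(t))=\boldsymbol{u}_{k,\max}-\boldsymbol{u}_{k,f}(t)\ge 0$ for every $k$ and all $t\ge 0$, which is exactly $\boldsymbol{u}_{\min}\le\boldsymbol{u}_f(t)\le\boldsymbol{u}_{\max}$ componentwise.

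For Lipschitz continuity I would extract a time-uniform bound on $\dot{\boldsymbol{u}}_f$ from the level-$1$ sets. Using \eqref{eq:sequence-f1} and the fact that $\boldsymbol{u}_{k,\min}$ and $\boldsymbol{u}_{k,\max}$ are constants, $\psi_{k,1}^{\min}=\dot{\boldsymbol{u}}_{k,f}+\alpha_1(\boldsymbol{u}_{k,f}-\boldsymbol{u}_{k,\min})$ and $\psi_{k,1}^{\max}=-\dot{\boldsymbol{u}}_{k,f}+\alpha_1(\boldsymbol{u}_{k,\max}-\boldsymbol{u}_{k,f})$, where $\dot{\boldsymbol{u}}_{k,f}=\boldsymbol{\pi}_{q+k}$ (or $\dot{\boldsymbol{u}}_{k,f}=\boldsymbol{\nu}_k$ when $m_a=1$). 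Forward invariance of the level-$1$ sets then gives $-\alpha_1(\boldsymbol{u}_{k,f}-\boldsymbol{u}_{k,\min})\le\dot{\boldsymbol{u}}_{k,f}(t)\le\alpha_1(\boldsymbol{u}_{k,\max}-\boldsymbol{u}_{k,f})$, and since the bounds just established keep both arguments of $\alpha_1$ in the compact interval $[0,\boldsymbol{u}_{k,\max}-\boldsymbol{u}_{k,\min}]$ on which the continuous class $\kappa$ function $\alpha_1$ is bounded, we obtain $|\dot{\boldsymbol{u}}_{k,f}(t)|\le L_k:=\alpha_1(\boldsymbol{u}_{k,\max}-\boldsymbol{u}_{k,\min})$ uniformly in $t$ (taking the larger constant if the min and max chains use different class $\kappa$ functions). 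Finally, since $F,G$ are locally Lipschitz and $\boldsymbol{\nu}(\cdot)$ is continuous, the trajectory $\boldsymbol{\pi}(\cdot)$ is $C^1$, so $\boldsymbol{u}_f=\boldsymbol{\pi}_{1:q}$ has a continuous derivative bounded by the $L_k$; the mean value theorem applied componentwise yields $\|\boldsymbol{u}_f(t_1)-\boldsymbol{u}_f(t_2)\|\le(\sum_{k=1}^q L_k^2)^{1/2}\,|t_1-t_2|$, so $\boldsymbol{u}_f$ is Lipschitz continuous, with constant tunable through the design of $\alpha_1$ and the bounds $\boldsymbol{u}_{\min},\boldsymbol{u}_{\max}$.

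The main obstacle is not a single hard step but two points requiring care. The first is ensuring the bound on $\dot{\boldsymbol{u}}_f$ is genuinely uniform in time; the key observation is that the level-$1$ inequalities of \emph{both} $b_k^{\min}$ and $b_k^{\max}$ sandwich $\dot{\boldsymbol{u}}_{k,f}$ between quantities that are automatically bounded once $\boldsymbol{u}_{k,f}\in[\boldsymbol{u}_{k,\min},\boldsymbol{u}_{k,\max}]$, which is precisely what the level-$0$ invariance supplies. The second is the implicit well-posedness issue: Theorem~\ref{thm:safety-guarantee} presupposes a trajectory defined on all of $[0,\infty)$, so a short continuation argument is needed—on the maximal interval of existence the chain-of-integrator structure of \eqref{eq:smoothness filter} together with the level-$i$ inequalities for $i\le m_a-1$ confines $\boldsymbol{u}_f,\dot{\boldsymbol{u}}_f,\dots,\boldsymbol{u}_f^{(m_a-1)}$, hence $\boldsymbol{\pi}$, to a compact set, ruling out finite escape.
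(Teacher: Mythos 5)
Your proposal is correct and follows essentially the same route as the paper's own proof: forward invariance of the HOCBF sets $\mathcal{C}_i^{\min},\mathcal{C}_i^{\max}$ yields the componentwise bounds on $\boldsymbol{u}_f$, and the two first-order HOCBF inequalities sandwich $\dot{\boldsymbol{u}}_{k,f}$ between continuous class $\kappa$ functions evaluated on a compact interval, giving a uniform bound and hence Lipschitz continuity. Your additions (the explicit constant $L_k=\alpha_1(\boldsymbol{u}_{k,\max}-\boldsymbol{u}_{k,\min})$ via monotonicity, the mean value theorem step for the vector-valued conclusion, and the no-finite-escape continuation argument) are refinements of the same argument rather than a different approach.
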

\begin{proof}
Since $b_k^{\text{min}}(\boldsymbol{\pi}),b_k^{\text{max}}(\boldsymbol{\pi})$ are HOCBFs, the initial condition satisfies $\boldsymbol{\pi}(0) \in \bigcap_{i=0}^{m_a - 1} \mathcal{C}_i^{\text{min}} \cap \mathcal{C}_i^{\text{max}}$, and there always exists $\boldsymbol{\pi}(t)$ to satisfy both \( \psi_{k,m_{a}}^{\text{min}}(\boldsymbol{\pi}) \ge 0\) and \( \psi_{k,m_{a}}^{\text{max}}(\boldsymbol{\pi})\ge 0 \), based on the Thm. \ref{thm:safety-guarantee}, the forward invariance of $\bigcap_{i=0}^{m_a - 1} \mathcal{C}_i^{\text{min}} \cap \mathcal{C}_i^{\text{max}}$ for system \eqref{eq:smoothness filter} is guaranteed. This means $\boldsymbol{u}_{k,\min} \le \boldsymbol{u}_{k,f} \le \boldsymbol{u}_{k,\max}$ is guaranteed. Based on \eqref{eq:sequence-f1}, we also have 
\begin{equation}
\label{eq:Lipschitz-proof-1}
\begin{split}
  \psi_{k,1}^{\text{min}}(\boldsymbol{\pi}) = \dot{\boldsymbol{u}}_{k,f} + \alpha_{k,1}^{\text{min}}(\boldsymbol{u}_{k,f}-\boldsymbol{u}_{k,\min})\ge 0,  \\
  \psi_{k,1}^{\text{max}}(\boldsymbol{\pi}) = -\dot{\boldsymbol{u}}_{k,f} + \alpha_{k,1}^{\text{max}}(\boldsymbol{u}_{k,\max}-\boldsymbol{u}_{k,f})\ge 0, 
\end{split}
\end{equation}
for $t\ge 0$. Rewrite \eqref{eq:Lipschitz-proof-1}, we have 
\begin{equation}
\label{eq:Lipschitz-proof-2}
\begin{split}
-\alpha_{k,1}^{\text{min}}(\boldsymbol{u}_{k,f}-\boldsymbol{u}_{k,\min}) \le \dot{\boldsymbol{u}}_{k,f}\le \alpha_{k,1}^{\text{max}}(\boldsymbol{u}_{k,\max}-\boldsymbol{u}_{k,f}).
\end{split}
\end{equation}
Rewrite \eqref{eq:Lipschitz-proof-2}, we have 
\begin{equation}
\label{eq:Lipschitz-proof-3}
\begin{split}
\left | \dot{\boldsymbol{u}}_{k,f} \right | \le \max (\left | \alpha_{k,1}^{\text{min}}(\boldsymbol{u}_{k,f}-\boldsymbol{u}_{k,\min}) \right | ,\\
\left | \alpha_{k,1}^{\text{max}}(\boldsymbol{u}_{k,\max}-\boldsymbol{u}_{k,f}) \right |)\le L_{k},
\end{split}
\end{equation}    
where $L_{k} \in \mathbb{R}$ is the uniform-in-time Lipschitz constant, as defined in Def.~\ref{def:Lipschitz continuity}, on the forward-invariant compact set $\boldsymbol{u}_{k, \min} \le \boldsymbol{u}_{k, f}(t) \le \boldsymbol{u}_{k, \max}$. We know that equation \eqref{eq:Lipschitz-proof-3} always holds because class $\kappa$ functions are continuous, and the arguments involved lie within a compact interval as $\boldsymbol{u}_{k,\min} \le \boldsymbol{u}_{k,f} \le \boldsymbol{u}_{k,\max}$. Therefore, $\boldsymbol{u}_{k,f}$ is Lipschitz continuous with respect to time. Since each component $\boldsymbol{u}_{k,f}$ of the vector $\boldsymbol{u}_{f}$ is Lipschitz continuous, it follows that $\boldsymbol{u}_{f}$ is also Lipschitz continuous as a vector-valued function. Moreover, the control input bounds $\boldsymbol{u}_{\min} \le \boldsymbol{u}_f \le \boldsymbol{u}_{\max}$ are satisfied componentwise.
\end{proof}
\begin{corollary}(Consequence of Thms. \ref{thm:safety-guarantee-2} and \ref{thm:lipschitz-guarantee})
\label{cor: FCBF-HOCBF}
Consider system \eqref{eq:affine-control-system} with control input $\boldsymbol{u}$ and control limits given by \eqref{eq:control-constraint}. Let $b(\boldsymbol{x})$ be a safety constraint with relative degree $m$, and $b_f(\boldsymbol{x},\boldsymbol{\pi})$ be a FCBF with relative degree $m_a$ for the auxiliary system \eqref{eq:smoothness filter}. 
Suppose:
\begin{enumerate}[label=\arabic*)]
    \item A controller $\boldsymbol{\nu}$ exists that satisfies the FCBF conditions in Thm. \ref{thm:safety-guarantee-2}, rendering the sets $\mathcal{C}_{0,f}, \dots, \mathcal{C}_{m_a-1,f}$ forward invariant.\label{item:fcbf}
    \item For each entry $\boldsymbol{u}_{k,f}$ of the filtered input $\boldsymbol{u}_f$, the functions $b_k^{\min}(\boldsymbol{\pi}) = \boldsymbol{u}_{k,f} - \boldsymbol{u}_{k,\min}$ and $b_k^{\max}(\boldsymbol{\pi}) = \boldsymbol{u}_{k,\max} - \boldsymbol{u}_{k,f}$ are HOCBFs for the auxiliary system with relative degree $m_a$, satisfying the conditions in Thm. \ref{thm:lipschitz-guarantee}.\label{item:hocbf}
    \item The control input $\boldsymbol{\nu}$ satisfies all constraints in Items \ref{item:fcbf} and \ref{item:hocbf} simultaneously, i.e., the FCBF and HOCBF conditions are jointly feasible and admit a common solution $\boldsymbol{\nu}$ at each time $t$.
\end{enumerate}
Then, the filtered input $\boldsymbol{u}_f$ is Lipschitz continuous and satisfies the safety constraint $b(x) \geq 0$ as well as the input constraints $\boldsymbol{u}_{\min} \leq \boldsymbol{u}_f \leq \boldsymbol{u}_{\max}$ for all $t \ge 0$.
\end{corollary}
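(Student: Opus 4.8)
The plan is to treat Corollary~\ref{cor: FCBF-HOCBF} as a direct assembly of the two preceding theorems, so essentially no new work is needed beyond bookkeeping the hypotheses. First I would observe that Item~\ref{item:fcbf} is exactly the hypothesis set of Thm.~\ref{thm:safety-guarantee-2}: a Lipschitz controller $\boldsymbol{\nu}$ satisfying \eqref{eq:highest-FCBF} for all $t \ge 0$, together with the initial condition $(\boldsymbol{x}(0),\boldsymbol{\pi}(0)) \in \mathcal{C}_0 \cap \dots \cap \mathcal{C}_{m-1} \cap \mathcal{C}_{0,f}\cap \dots \cap \mathcal{C}_{m_a-1,f}$. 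Applying Thm.~\ref{thm:safety-guarantee-2} yields forward invariance of this full intersection, and in particular $\psi_0(\boldsymbol{x}) = b(\boldsymbol{x}) \ge 0$ for all $t \ge 0$, i.e., the safety constraint holds. I would stress that this step uses $\boldsymbol{u} = \boldsymbol{u}_f$, so that the chain $\psi_{0,f} \ge 0 \Leftrightarrow \psi_m(\boldsymbol{x},\boldsymbol{u}_f) \ge 0$ feeds the HOCBF sequence for $b$.

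Next I would invoke Thm.~\ref{thm:lipschitz-guarantee} under Item~\ref{item:hocbf}: the pair $b_k^{\min}(\boldsymbol{\pi}) = \boldsymbol{u}_{k,f} - \boldsymbol{u}_{k,\min}$ and $b_k^{\max}(\boldsymbol{\pi}) = \boldsymbol{u}_{k,\max} - \boldsymbol{u}_{k,f}$ are HOCBFs of relative degree $m_a$ for \eqref{eq:smoothness filter}, the initial auxiliary state lies in $\bigcap_{i=0}^{m_a-1}\mathcal{C}_i^{\min}\cap\mathcal{C}_i^{\max}$, and $\boldsymbol{\nu}$ enforces $\psi_{k,m_a}^{\min}(\boldsymbol{\pi}) \ge 0$ and $\psi_{k,m_a}^{\max}(\boldsymbol{\pi}) \ge 0$. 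Thm.~\ref{thm:lipschitz-guarantee} then delivers both conclusions simultaneously: $\boldsymbol{u}_{\min} \le \boldsymbol{u}_f \le \boldsymbol{u}_{\max}$ componentwise, and Lipschitz continuity of $\boldsymbol{u}_f$ in $t$ via the bound $|\dot{\boldsymbol{u}}_{k,f}| \le L_k$ coming from \eqref{eq:Lipschitz-proof-3}.

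The only substantive role of Item~\ref{item:hyp3} — the joint-feasibility hypothesis — is to guarantee that one single Lipschitz $\boldsymbol{\nu}(t)$ satisfies the FCBF inequality \eqref{eq:highest-FCBF} and all $2q$ input-bound HOCBF inequalities $\psi_{k,m_a}^{\min}(\boldsymbol{\pi})\ge 0$, $\psi_{k,m_a}^{\max}(\boldsymbol{\pi})\ge 0$ at every time. I would note this is needed because Thms.~\ref{thm:safety-guarantee-2} and~\ref{thm:lipschitz-guarantee} were each stated for a controller satisfying their own constraint set, and we must certify that the two constraint sets admit a common solution so that the same closed-loop trajectory $(\boldsymbol{x}(t),\boldsymbol{\pi}(t))$ witnesses both invariance conclusions. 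Once this is granted, I would simply conclude that along the resulting trajectory all the invoked invariant sets hold for all $t \ge 0$, hence $b(\boldsymbol{x}) \ge 0$, $\boldsymbol{u}_{\min}\le\boldsymbol{u}_f\le\boldsymbol{u}_{\max}$, and $\boldsymbol{u}_f$ Lipschitz.

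The main (and really the only) obstacle is the joint-feasibility assumption itself: in general, stacking the FCBF constraint and the input-bound HOCBF constraints into one QP need not be feasible, and proving feasibility would require analyzing the linear-inequality system in $\boldsymbol{\nu}$ — checking that the gradients $L_G L_F^{m_a-1}b_f$ and $\partial \psi_{k,m_a}^{\min,\max}/\partial\boldsymbol{\nu}$ do not conflict. Since the corollary takes this as a hypothesis (Item~\ref{item:hyp3}), the proof sidesteps it; I would only remark that feasibility can be promoted in practice by choosing a sufficiently large relative degree $m_a$ and suitable class-$\kappa$ functions, leaving a rigorous sufficient condition to future work.
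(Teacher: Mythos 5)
Your proposal is correct and takes essentially the same approach as the paper, which simply states that the result follows directly from Thms.~\ref{thm:safety-guarantee-2} and \ref{thm:lipschitz-guarantee}; you have merely made explicit the bookkeeping (which hypothesis feeds which theorem, and why the joint-feasibility assumption is needed to let one $\boldsymbol{\nu}$ witness both conclusions along the same trajectory) that the paper leaves implicit. The only cosmetic issue is that you refer to the third item by a label the statement does not define, but the mathematical content matches the intended argument.
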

\begin{proof}
The result follows directly from Thms. \ref{thm:safety-guarantee-2} and \ref{thm:lipschitz-guarantee}.
\end{proof}
\begin{corollary}
\label{cor:first-order-low pass filter}
Based on Cor. \ref{cor: FCBF-HOCBF}, consider the auxiliary system defined by the first-order low-pass filter:
\begin{equation}
\label{eq:low pass filter}
\dot{\boldsymbol{u}}_f = \frac{1}{\tau} (\boldsymbol{\nu} - \boldsymbol{u}_f),
\end{equation}
with filter hyperparameter \( \tau > 0 \), where \( \boldsymbol{u}_f \in \mathbb{R}^q \) is the filtered control input. Suppose all conditions in Cor. \ref{cor: FCBF-HOCBF} hold (the relative degree of FCBF and HOCBFs with respect to system \eqref{eq:low pass filter} is $m_{a}=1$). Then, the filtered input \( \boldsymbol{u}_f \) is Lipschitz continuous and satisfies the safety requirement as well as the control bounds \( \boldsymbol{u}_{\min} \le \boldsymbol{u}_f \le \boldsymbol{u}_{\max} \) for all \( t \ge 0 \).
\end{corollary}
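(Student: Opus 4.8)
The plan is to recognize \eqref{eq:low pass filter} as the $m_a=1$ instance of the general input regularization filter \eqref{eq:smoothness filter} and then to verify, one hypothesis at a time, that the structural assumptions underlying Cor.~\ref{cor: FCBF-HOCBF} are automatically met, so that the conclusion follows by a direct invocation of that corollary (equivalently, of Thms.~\ref{thm:safety-guarantee-2} and~\ref{thm:lipschitz-guarantee}). Concretely, I would first identify $\boldsymbol{\pi} = \boldsymbol{u}_f \in \mathbb{R}^q$, $F(\boldsymbol{\pi}) = -\tfrac{1}{\tau}\boldsymbol{u}_f$, and $G(\boldsymbol{\pi}) = \tfrac{1}{\tau} I_q$, and observe that $F$ and $G$ are affine (indeed linear/constant), hence globally and a fortiori locally Lipschitz, as required by \eqref{eq:smoothness filter}. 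Since $\dot{\boldsymbol{u}}_f$ depends explicitly on $\boldsymbol{\nu}$, the auxiliary state $\boldsymbol{u}_f$ has relative degree $m_a = 1$ with respect to \eqref{eq:low pass filter}, consistent with the bracketed claim in the statement, and the connection $\boldsymbol{\pi}_{1:q}=\boldsymbol{u}_f$, $\boldsymbol{u}_f^{(1)}=\boldsymbol{\nu}$ of the general construction holds trivially.

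Next I would check the relative-degree and nondegeneracy hypotheses for the three barrier functions involved. For $b_f(\boldsymbol{x},\boldsymbol{\pi}) = \psi_m(\boldsymbol{x},\boldsymbol{u}_f)$: differentiating once along the coupled dynamics \eqref{eq:affine-control-system}--\eqref{eq:low pass filter} produces the term $\tfrac{1}{\tau}\,(\partial \psi_m/\partial \boldsymbol{u}_f)\,\boldsymbol{\nu}$, so $b_f$ has relative degree $m_a = 1$ and $L_G L_F^{0} b_f = \tfrac{1}{\tau}\,\partial\psi_m/\partial\boldsymbol{u}_f$, which is nonzero on the relevant boundary exactly under the standing assumption in Def.~\ref{def:FCBF} (inherited from the nondegeneracy assumption $L_g L_f^{m-1} b(\boldsymbol{x})\neq 0$ of Def.~\ref{def:HOCBF}); hence $b_f$ is a legitimate FCBF of relative degree $m_a=1$. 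For the input-limit barriers $b_k^{\min}(\boldsymbol{\pi}) = \boldsymbol{u}_{k,f} - \boldsymbol{u}_{k,\min}$ and $b_k^{\max}(\boldsymbol{\pi}) = \boldsymbol{u}_{k,\max} - \boldsymbol{u}_{k,f}$, each is affine in $\boldsymbol{u}_f$ with constant nonzero gradient $\pm e_k$, so each has relative degree $1$ with respect to \eqref{eq:low pass filter} and $L_G b_k^{\cdot} = \pm \tfrac{1}{\tau} e_k^\top \neq 0$; hence these are HOCBFs of relative degree $m_a = 1$ in the sense of Def.~\ref{def:HOCBF}. With $m_a = 1$ the sequences \eqref{eq:sequence-f2} and \eqref{eq:sequence-f1} collapse to a single stage, the FCBF constraint \eqref{eq:highest-FCBF} reduces to $\sup_{\boldsymbol{\nu}}[L_F b_f + L_G b_f \boldsymbol{\nu} + \alpha_1(\psi_{0,f})] \ge 0$, and the input-limit constraints reduce to the pair in \eqref{eq:Lipschitz-proof-1} with $\dot{\boldsymbol{u}}_{k,f} = \tfrac{1}{\tau}(\nu_k - \boldsymbol{u}_{k,f})$; moreover the required class-$\kappa$ differentiability order $(m_a - i)$ is zero, i.e.\ mere continuity, which every class $\kappa$ function possesses, so no extra smoothness is demanded.

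Having matched all structural hypotheses, I would note that the remaining assumptions of Cor.~\ref{cor: FCBF-HOCBF}---that a Lipschitz $\boldsymbol{\nu}$ exists satisfying the FCBF condition (Item~\ref{item:fcbf}), that the two input-limit HOCBF conditions of Thm.~\ref{thm:lipschitz-guarantee} hold (Item~\ref{item:hocbf}), and that these are jointly feasible with a common $\boldsymbol{\nu}(t)$ at each $t$ (Item~3)---are precisely what ``all conditions in Cor.~\ref{cor: FCBF-HOCBF} hold'' postulates in the present statement. Invoking Cor.~\ref{cor: FCBF-HOCBF} then yields that $\boldsymbol{u}_f$ is Lipschitz continuous (via $\psi_{k,1}^{\min},\psi_{k,1}^{\max}$ bounding $\dot{\boldsymbol{u}}_{k,f}$ on the compact invariant interval $[\boldsymbol{u}_{k,\min},\boldsymbol{u}_{k,\max}]$, as in the proof of Thm.~\ref{thm:lipschitz-guarantee}), that it respects $\boldsymbol{u}_{\min}\le\boldsymbol{u}_f\le\boldsymbol{u}_{\max}$, and that it keeps $b(\boldsymbol{x})\ge 0$ along \eqref{eq:affine-control-system}. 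I do not anticipate a substantive obstacle; the only point needing care is the bookkeeping of the $m_a=1$ reduction---confirming that it does not silently require stronger differentiability of the class $\kappa$ functions and that the single-stage nondegeneracy condition $L_G L_F^{0} b_f \neq 0$ is genuinely the inherited HOCBF assumption rather than a new hypothesis. Optionally, I would add a remark that for this particular filter the bound on $\dot{\boldsymbol{u}}_f$ also follows transparently from $\dot{\boldsymbol{u}}_f = \tfrac{1}{\tau}(\boldsymbol{\nu}-\boldsymbol{u}_f)$ once $\boldsymbol{\nu}$ and $\boldsymbol{u}_f$ are confined to compact sets, which makes the resulting Lipschitz constant and its dependence on $\tau$ explicit.
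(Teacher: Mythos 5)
Your proposal is correct and follows the same route as the paper, whose entire proof is the single sentence ``The result follows directly from Cor.~\ref{cor: FCBF-HOCBF}.'' Your explicit verification that the first-order low-pass filter instantiates \eqref{eq:smoothness filter} with $m_a=1$ (identifying $F$, $G$, the relative degrees, and the nondegeneracy conditions) simply makes precise the bookkeeping that the paper leaves implicit.
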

\begin{proof}
The result follows directly from Cor. \ref{cor: FCBF-HOCBF}.
\end{proof}
 \begin{remark}[Smoothness of Filtered Control]
\label{rem:low pass filter}
Based on Cor. \ref{cor: FCBF-HOCBF}, any low-pass filter can be designed as an auxiliary dynamic system \eqref{eq:smoothness filter} to enhance the smoothness of the filtered input \( \boldsymbol{u}_f \). Low-pass filters are designed to attenuate the high-frequency components of a signal, allowing only the low-frequency (slowly-varying) content to pass through. When applied to control inputs, this filtering process suppresses abrupt changes and sharp transitions, resulting in smoother, more gradual control actions. The degree of smoothing depends on the filter design parameters, such as its order (which determines the steepness of the response) and cutoff frequency (which sets the maximum rate of output variation), although all low-pass filters inherently limit rapid changes in the output. In Cor. \ref{cor:first-order-low pass filter}, a larger $\tau$ in first-order low-pass filter slows down the response of $\boldsymbol{u}_f$, leading to smoother and more gradual changes over time. Conversely, a smaller $\tau$ allows $\boldsymbol{u}_f$ to track $\boldsymbol{\nu}$ more closely, reducing the smoothing effect. If we rewrite \eqref{eq:Lipschitz-proof-3} as
\begin{equation}
\label{eq:Lipschitz-proof-4}
\begin{split}
\left | \dot{\boldsymbol{u}}_{k,f} \right | \le \max (\left | \kappa_{k,1}^{\min}(\boldsymbol{u}_{k,f}-\boldsymbol{u}_{k,\min}) \right |, \\
\left | \kappa_{k,1}^{\max}(\boldsymbol{u}_{k,\max}- \boldsymbol{u}_{k,f}) \right |)\le L_{k}, 
\end{split}
\end{equation}  
where $\alpha_{k,1}^{\min}(\cdot),\alpha_{k,1}^{\max}(\cdot)$ are defined as linear functions with hyperparameters $\kappa_{k,1}^{\min}\ge0,\kappa_{k,1}^{\max}\ge0$. We observe that these hyperparameters affect the Lipschitz constant $L_{k}$, and consequently influence the smoothness of the filtered input \( \boldsymbol{u}_f \). In particular, increasing $\kappa_{k,1}^{\min},\kappa_{k,1}^{\max}$ leads to a larger Lipschitz constant $L_{k}$, which allows for greater instantaneous variation in \( \boldsymbol{u}_f \).
\end{remark}

\subsection{Optimal Control with FCBFs}
Consider the optimal control problem from \eqref{eq:cost-function-1}. Since we need to use filtered input $\boldsymbol{u}_f$ to replace $\boldsymbol{u}$ and introduce auxiliary input $\boldsymbol{\nu}$ to ensure safety and Lipschitz continuity, we reformulate the cost in \eqref{eq:cost-function-1} as
\begin{small}
\begin{equation}
\label{eq:cost-function-2}
\begin{split}
 \min_{\boldsymbol{\nu}} \int_{0}^{T} 
 [D(\left \| \boldsymbol{\nu} \right \| )+p\left \| \boldsymbol{x}(T)-\boldsymbol{x}_{e} \right \| ^{2}]dt,
\end{split}
\end{equation}
\end{small}
where $\left \| \cdot \right \|$ denotes the 2-norm of a vector, $D(\cdot)$ is a strictly increasing function of its argument and $T>0$ denotes the ending time. $p\left \| \boldsymbol{x}(T)-\boldsymbol{x}_{e} \right \| ^{2}$ denotes state convergence similar to \eqref{eq:cost-function-1}. To minimize $\left \| \boldsymbol{u}_{f} \right \|$, we can perform input–output linearization for \eqref{eq:smoothness filter} and use the CLF introduced in ~\cite[Eq. 24]{xiao2021adaptive} to minimize $\boldsymbol{u}_{f}^{\top}\boldsymbol{u}_{f}$. We can then formulate the CLFs (in Def. \ref{def:control-l-f} and  ~\cite[Eq. (24)]{xiao2021adaptive}), HOCBFs (in Thm. \ref{thm:lipschitz-guarantee} and Cor. \ref{cor: FCBF-HOCBF}) and FCBFs (in Def. \ref{def:FCBF} and Cor. \ref{cor: FCBF-HOCBF}) as constraints of the QP with cost function \eqref{eq:cost-function-2} to realize safety-critical control.
\begin{remark}[Parameter-Tuning for FCBFs and HOCBFs]
\label{rem: parameter-tuning}
There are many hyperparameters to tune in FCBFs and HOCBFs to satisfy conditions in Cor. \ref{cor: FCBF-HOCBF}, such as the hyperparameters in class $\kappa$ functions \( \alpha_i \) in FCBFS and HOCBFs as well as the hyperparameters in the input regularization filter \eqref{eq:smoothness filter} (e.g., $\tau$ in the first-order low-pass filter \eqref{eq:low pass filter}). These hyperparameters play a critical role in system performance and are typically determined empirically. However, tuning these parameters is non-trivial and often depends on the specific application. To address this challenge, since the framework of FCBFs is analogous to that of HOCBFs, both can be reformulated as Auxiliary Variable Adaptive CBFs (AVCBFs) as introduced in \cite{liu2025auxiliary}. This allows the design of safety-feasibility criteria and a parameterization method to automatically tune the corresponding hyperparameters.
\end{remark}

\section{Case Study and Simulations}
\label{sec:Case Study and Simulations}

In this section, we consider the unicycle model with the dynamics given by \eqref{eq:UM-dynamics2} for Problem \ref{prob:smooth-prob}, which is more realistic than the simplified unicycle model used in the case study of \cite{xiao2021high2}. We use MATLAB's \texttt{quadprog} solver to solve the QP at each time step, and integrate the system dynamics using the \texttt{ode45} function. All computations are performed on a computer equipped with an Intel\textsuperscript{\textregistered} Core\texttrademark{} i7-11750F CPU @ 2.50\,GHz. The average computation time for each QP is less than 0.01\,s.

\begin{small}
\begin{equation}
\label{eq:UM-dynamics2}
\underbrace{\begin{bmatrix}
\dot{x}(t) \\
\dot{y}(t) \\
\dot{\theta}(t)\\
\dot{v}(t)
\end{bmatrix}}_{\dot{\boldsymbol{x}}(t)}  
=\underbrace{\begin{bmatrix}
 v(t)\cos{(\theta(t))}  \\
 v(t)\sin{(\theta(t))} \\
 0 \\
 0
\end{bmatrix}}_{f(\boldsymbol{x}(t))} 
+ \underbrace{\begin{bmatrix}
  0 & 0\\
  0 & 0\\
  1 & 0\\
  0 & \frac{1}{M} 
\end{bmatrix}}_{g(\boldsymbol{x}(t))}\underbrace{\begin{bmatrix}
   u_{1}(t)   \\
  u_{2}(t) 
\end{bmatrix}}_{\boldsymbol{u}(t)}.
\end{equation}
\end{small}

In \eqref{eq:UM-dynamics2}, $M$ is the mass of the unicycle, $[x, y]^{\top}$ denote the coordinates of the unicycle, $v$ is its linear speed, $\theta$ denotes the heading angle, and $\boldsymbol{u}$ represent the angular velocity ($u_{1}$) and the driven force ($u_{2}$), respectively. We use a first-order low-pass filter \eqref{cor:first-order-low pass filter} as the auxiliary dynamic system and replace the original control input 
$\boldsymbol{u}$ with the filtered input $\boldsymbol{u}_{f}$. This filtered input is then integrated with the unicycle model in \eqref{eq:UM-dynamics2}. The resulting augmented system is given as follows:

\begin{small}
\begin{equation}
\label{eq:UM-dynamics3}
\underbrace{\begin{bmatrix}
\dot{x}(t) \\
\dot{y}(t) \\
\dot{\theta}(t)\\
\dot{v}(t) \\
\dot{u}_{f_1}(t) \\
\dot{u}_{f_2}(t)
\end{bmatrix}}_{(\dot{\boldsymbol{x}}(t),\dot{\boldsymbol{u}}_{f}(t))}
=\begin{bmatrix}
 v(t)\cos{(\theta(t))}  \\
 v(t)\sin{(\theta(t))} \\
 u_{f_1}(t)\\
 \frac{u_{f_2}(t)}{M}\\
 -\frac{1}{\tau} u_{f_1}(t)  \\
 -\frac{1}{\tau} u_{f_2}(t) 
\end{bmatrix} 
+ \begin{bmatrix}
  0 & 0\\
  0 & 0\\
  0 & 0\\
  0 & 0 \\
  \frac{1}{\tau} & 0\\
  0 & \frac{1}{\tau}
\end{bmatrix}\underbrace{\begin{bmatrix}
   \nu_{1}(t)   \\
  \nu_{2}(t) 
\end{bmatrix}}_{\boldsymbol{\nu}(t)}.
\end{equation}
\end{small}
The system is discretized with a time step of \( \Delta t = 0.1\,\text{s} \), and the total number of discretization steps in Problem \ref{prob:smooth-prob} is \( T = 5s \). The augmented unicycle model in \eqref{eq:UM-dynamics3} is subject to the following input constraints:
\begin{equation}
\begin{split}
\label{eq:filtered-input-constraint}
\mathcal{U}_{f}&=\{\boldsymbol{u}_{f}\in \mathbb{R}^{2}: [-5, -5M]^{\top} \le \boldsymbol{u}_{f}\le [5, 5M]^{\top}\}.
\end{split}
\end{equation}
The initial state is \( [-3,\ 0,\ \frac{\pi}{12},\ 2]^{\top} \), which is indicated by a black diamond in Fig.~\ref{fig:FCBF-trajectory}. The circular obstacle is centered at \( (x_{o}, y_{o}) = (0,\ 0) \) with radius \( r_o = 1 \), and is shown in pale red. We consider the case when the unicycle has to avoid this circular obstacle. The distance is considered safe
if $(x-x_{o})^{2}+(y-y_{o})^{2}-r_{o}^{2}\ge 0$. We define HOCBFs to ensure safety as
\begin{equation}
\label{eq:HOCBFs-1}
\begin{split}
&\psi_{0}(\boldsymbol{x})= (x-x_{o})^{2}+(y-y_{o})^{2}-r_{o}^{2},\\
&\psi_{1}(\boldsymbol{x})\coloneqq \dot{\psi}_{0}(\boldsymbol{x})+k_{1}\psi_{0}(\boldsymbol{x}),\\
&\psi_{2}(\boldsymbol{x},\boldsymbol{u}_{f})\coloneqq \dot{\psi}_{1}(\boldsymbol{x},\boldsymbol{u}_{f})+k_{2}\psi_{1}(\boldsymbol{x}),
\end{split}
\end{equation}
where $\alpha_{1}(\cdot),\alpha_{2}(\cdot)$ are defined as linear functions. We define FCBFs as
\begin{equation}
\label{eq:FCBF}
\begin{split}
&\psi_{0,f}(\boldsymbol{x},\boldsymbol{u}_{f})\coloneqq \psi_{2}(\boldsymbol{x},\boldsymbol{u}_{f}),\\
&\psi_{1,f}(\boldsymbol{x},\boldsymbol{u}_{f}, \bm \nu)\coloneqq \dot{\psi}_{0,f}(\boldsymbol{x},\boldsymbol{u}_{f}, \bm \nu)+k_{3}\psi_{0,f}(\boldsymbol{x},\boldsymbol{u}_{f}),
\end{split}
\end{equation}
where $\alpha_{3}(\cdot)$ is defined as a linear function. We define HOCBFs to ensure the filtered input is Lipschitz continuous and satisfies the constraints \eqref{eq:filtered-input-constraint} as
\begin{equation}
\label{eq:HOCBFs-2}
\begin{split}
&\psi_{i,0}^{\text{min}}=u_{f_i}-u_{i,\text{min}},\psi_{i,0}^{\text{max}}=u_{i,\text{max}}-u_{f_i}\\
&\psi_{i,1}^{\text{min}}(u_{f_i},\nu_{i})\coloneqq \dot{\psi}_{i,0}^{\text{min}}(u_{f_i},\nu_{i})+k_{i,1}^{\text{min}}\psi_{i,0}^{\text{min}},\\
&\psi_{i,1}^{\text{max}}(u_{f_i},\nu_{i})\coloneqq \dot{\psi}_{i,0}^{\text{max}}(u_{f_i},\nu_{i})+k_{i,1}^{\text{max}}\psi_{i,0}^{\text{max}},
\end{split}
\end{equation}
where $\alpha_{i,1}^{\text{min}}(\cdot),\alpha_{i,1}^{\text{max}}(\cdot)$ are defined as linear functions and $i\in \{1,2\}$. To satisfy the state convergence, we define a CLF $V(\boldsymbol{x}(t),\boldsymbol{\pi}(t))=(10(\theta(t)-\theta_{d})+u_{f_1}(t)+u_{f_2}(t))^{2}$ with $\theta_{d}=atan2(\frac{y_{d}-y(t)}{x_{d}-x(t)}), c_{1}=c_{2}=1$ to stabilize $\theta(t)$ to $\theta_{d}$ and $u_{f_1}(t)+u_{f_2}(t)$ to $0$ and formulate the relaxed constraint in \eqref{eq:clf} as
\begin{equation}
\label{eq:ACC-clf-2}
\dot{V}(\boldsymbol{x}(t),\boldsymbol{u}_{f}(t), \bm \nu(t)) +c_{3}V(\boldsymbol{x}(t),\boldsymbol{u}_{f}(t))\le \delta(t), 
\end{equation}
where $\delta(t)$ is a relaxation that makes \eqref{eq:ACC-clf-2} a soft constraint, $(x_{d}, y_{d})=(1.5, 0)$ is the desired location and $r_{d}=0.1$ is tolerance for state convergence (indicated by a green circle in Fig.~\ref{fig:FCBF-trajectory}).
By formulating constraints from HOCBFs \eqref{eq:HOCBFs-1},\eqref{eq:HOCBFs-2}, FCBFs \eqref{eq:FCBF}, CLFs \eqref{eq:ACC-clf-2}, we can define the cost function 
 for the QP as
 
 \begin{small}
\begin{equation}
\label{eq:AVBCBF-cost-2}
\begin{split}
\min_{\boldsymbol{\nu}(t),\delta(t)} \int_{0}^{T}[\boldsymbol{\nu}(t)^{\top}\boldsymbol{\nu}(t)+Q\delta(t)^{2}]dt.
\end{split}
\end{equation}
\end{small}
Other parameters are $c_{3}=10, M=1650kg, u_{f_1}(0)=u_{f_2}(0)=0, Q=10^{5}, k_{1}=k_{2}=10, k_{1,1}^{\text{min}}=k_{2,1}^{\text{min}}=k_{1,1}^{\text{max}}=k_{2,1}^{\text{max}}=\alpha$. The above method is referred to as FCBF. For the first benchmark, we only include constraints from HOCBFs \eqref{eq:HOCBFs-1}, input \eqref{eq:filtered-input-constraint} and CLFs \eqref{eq:ACC-clf-2} with $V(\boldsymbol{x}(t))=(\theta(t)-\theta_{d})^{2}$ and refer to this as HOCBF method (note that the decision variables in the cost function \eqref{eq:AVBCBF-cost-2} are $\boldsymbol{u}(t),\delta(t)$). For the second benchmark, we build upon the first benchmark by adding $0.1(u_{1}(t)-u_{1}(t-\Delta t))^{2}+0.1(u_{2}(t)-u_{2}(t-\Delta t))^{2}$ to the cost function. This term encourages gradual changes by penalizing the difference between the control inputs at consecutive time steps. We refer to the second benchmark as sp-HOCBF (smoothness-penalized HOCBF).

\begin{figure}[ht]
\vspace*{3mm}
    \centering
    \includegraphics[scale=0.15]{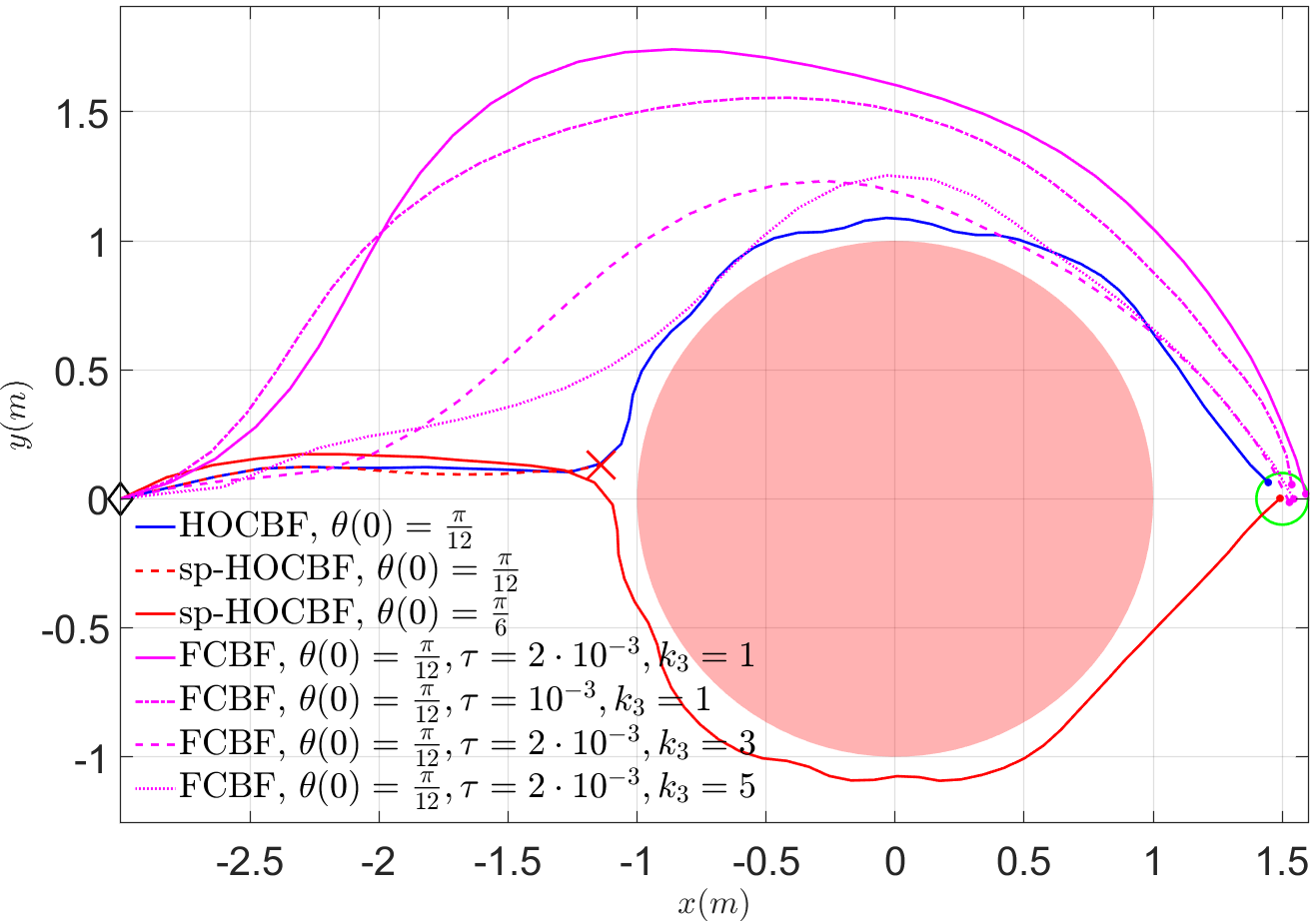}
    \caption{Closed-loop trajectories with controllers derived using FCBF (magenta), HOCBF (blue) and sp-HOCBF (red). FCBF ($\alpha=1$) and HOCBF perform well in safety-critical navigation when the initial heading angle is small. FCBF produces smoother trajectories, demonstrates strong adaptability, and maintains feasibility across a range of hyperparameters.}
    \label{fig:FCBF-trajectory}
\end{figure} 

As shown in Fig. \ref{fig:FCBF-trajectory}, FCBF, HOCBF, and sp-HOCBF all successfully guide the unicycle from the initial position (indicated by the black diamond) to the desired area (green circle) while ensuring safety. Due to the constraint on the difference between consecutive control inputs in sp-HOCBF, the QP becomes infeasible when the initial heading angle is small (e.g., $\theta(0)=\frac{\pi}{12}$), as indicated by the red cross. Increasing the initial heading angle to $\frac{\pi}{6}$ resolves the infeasibility issue in sp-HOCBF, while the other controllers (FCBF and HOCBF) remain feasible even under small heading angles, demonstrating better feasibility. When the FCBF hyperparameter $k_{3}$ is increased (the other hyperparameters are held constant, e.g., $\alpha=1$, $\tau=2\cdot 10^{-3}$), the unicycle follows a more aggressive control strategy and moves closer to the obstacle. Among the three methods, FCBF produces the smoothest trajectory.

As illustrated in Figs. \ref{fig:FCBF-input1} and \ref{fig:FCBF-input2}, and consistent with Rem. \ref{rem:low pass filter}, increasing the hyperparameter $\tau$ tends to result in smoother variations in the control input $\boldsymbol{u}$ for FCBF when other hyperparameters are held constant (e.g., $k_{3}=\alpha =1$). Compared to HOCBF and sp-HOCBF, the control input under FCBF exhibits smoother variations. However, by comparing the input curves of HOCBF and sp-HOCBF, we observe that introducing a smoothness penalty does not significantly improve the smoothness of the control input. As shown in Fig. \ref{fig:FCBF-input3}, when the other hyperparameters are held constant, e.g., $k_{3}=1$, $\tau=2\cdot 10^{-3}$, smaller values of 
$\alpha$ tend to promote smoother variations in $\boldsymbol{u}$ for FCBF, which is consistent with Rem. \ref{rem:low pass filter}. We conclude that FCBF offers improved input smoothness and greater control flexibility compared to HOCBF and sp-HOCBF.

\begin{figure}[ht]
\vspace*{3mm}
    \centering
    \includegraphics[scale=0.44]{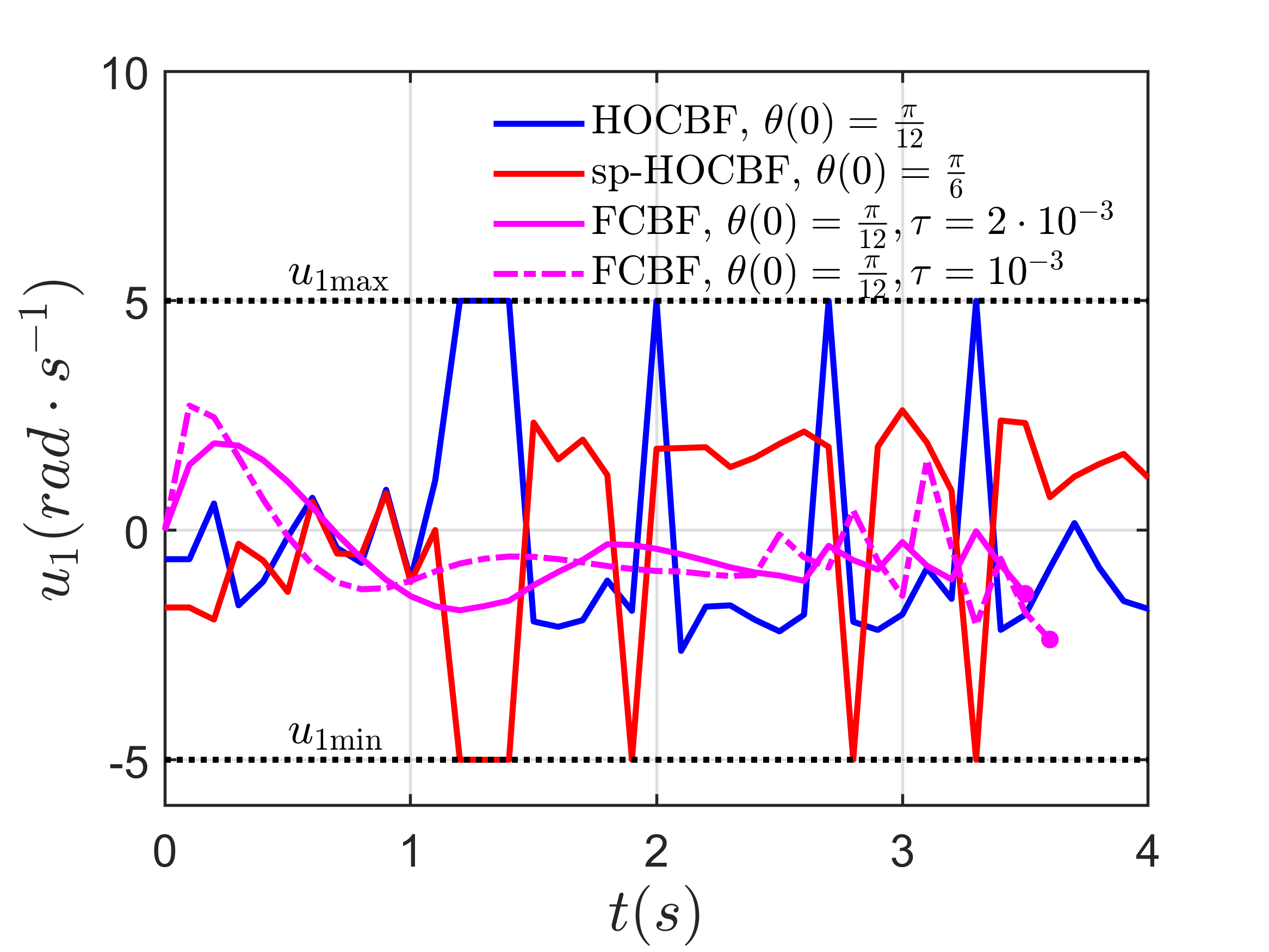}
    \caption{Control input $u_{1}$ (angular velocity) over time with different controllers. FCBF ($k_{3}=\alpha=1$) ensures smoother transitions of $u_{1}$ compared to HOCBF and sp-HOCBF.}
    \label{fig:FCBF-input1}
\end{figure} 
\begin{figure}[ht]
    \centering
    \includegraphics[scale=0.44]{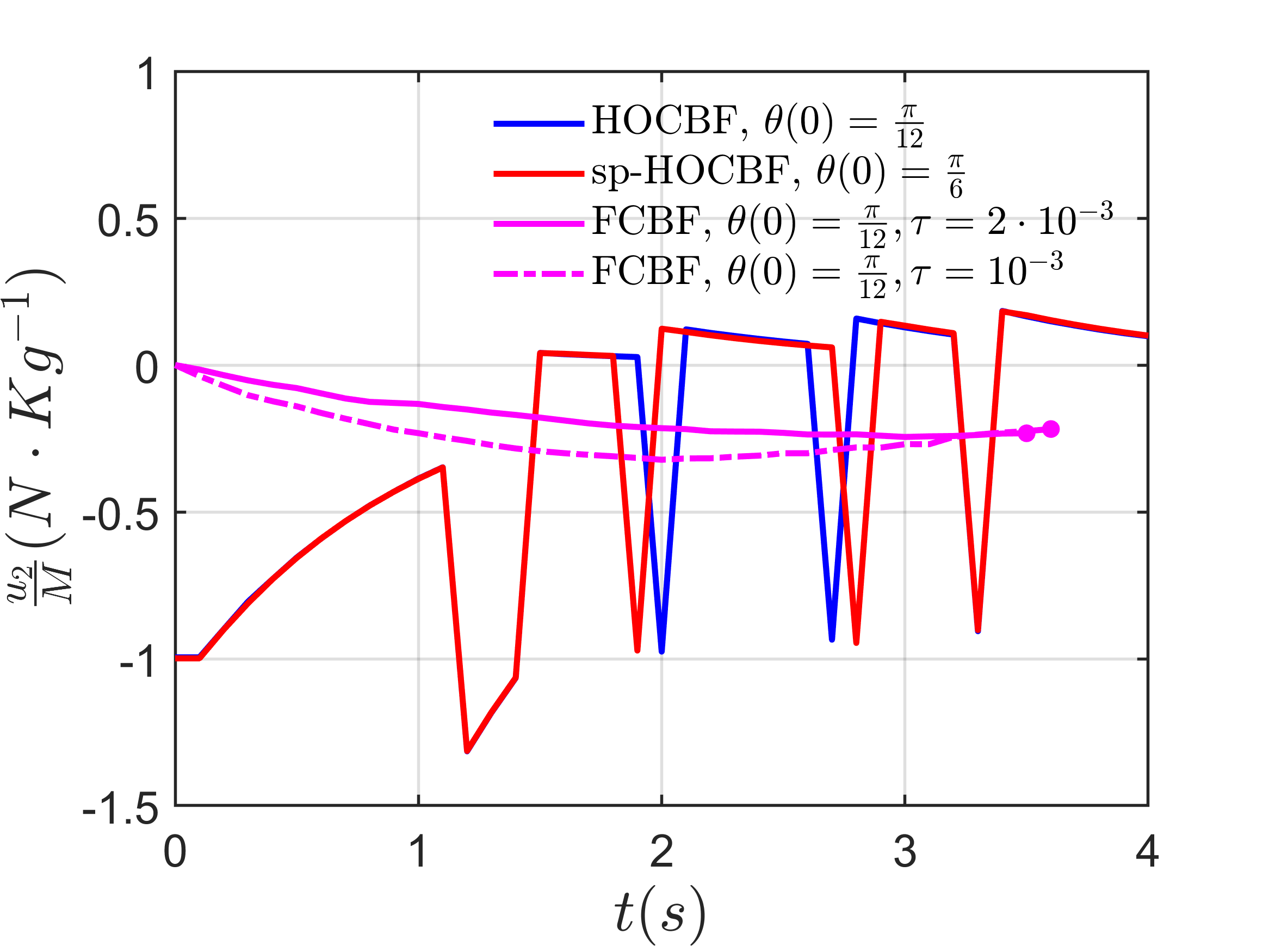}
    \caption{Control input $u_{2}$ (driven force) over time with different controllers. FCBF ($k_{3}=\alpha=1$) ensures smoother transitions of $u_{2}$ compared to HOCBF and sp-HOCBF.}
    \label{fig:FCBF-input2}
\end{figure} 

\begin{figure}[t]
    \vspace{3mm}
    \centering
    \begin{subfigure}[t]{0.49\linewidth}
        \centering
        \includegraphics[width=1\linewidth]{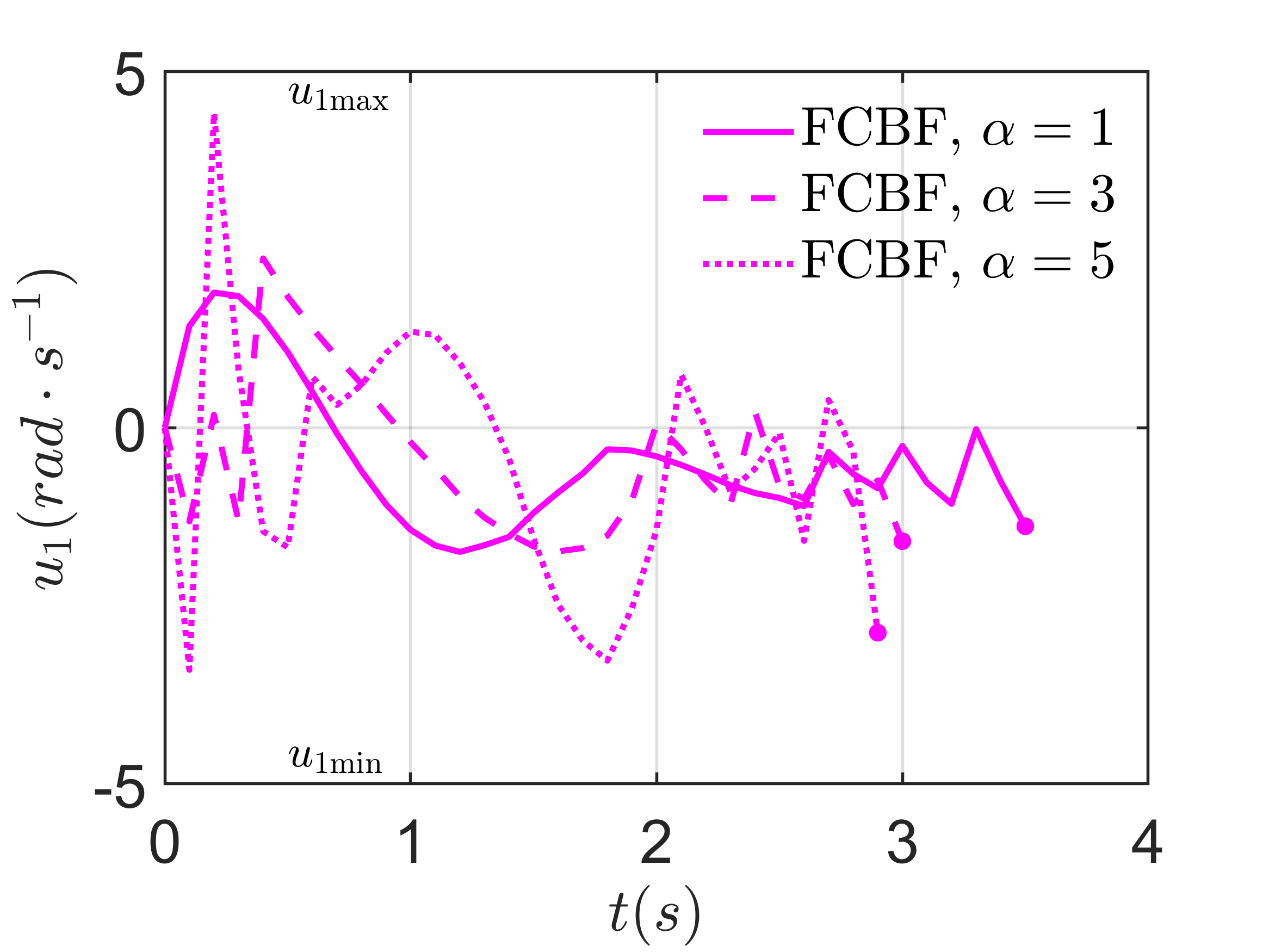}
        \caption{Control input $u_{1}$ over time}
        \label{fig:diff q1}
    \end{subfigure}
    \begin{subfigure}[t]{0.49\linewidth}
        \centering
        \includegraphics[width=1\linewidth]{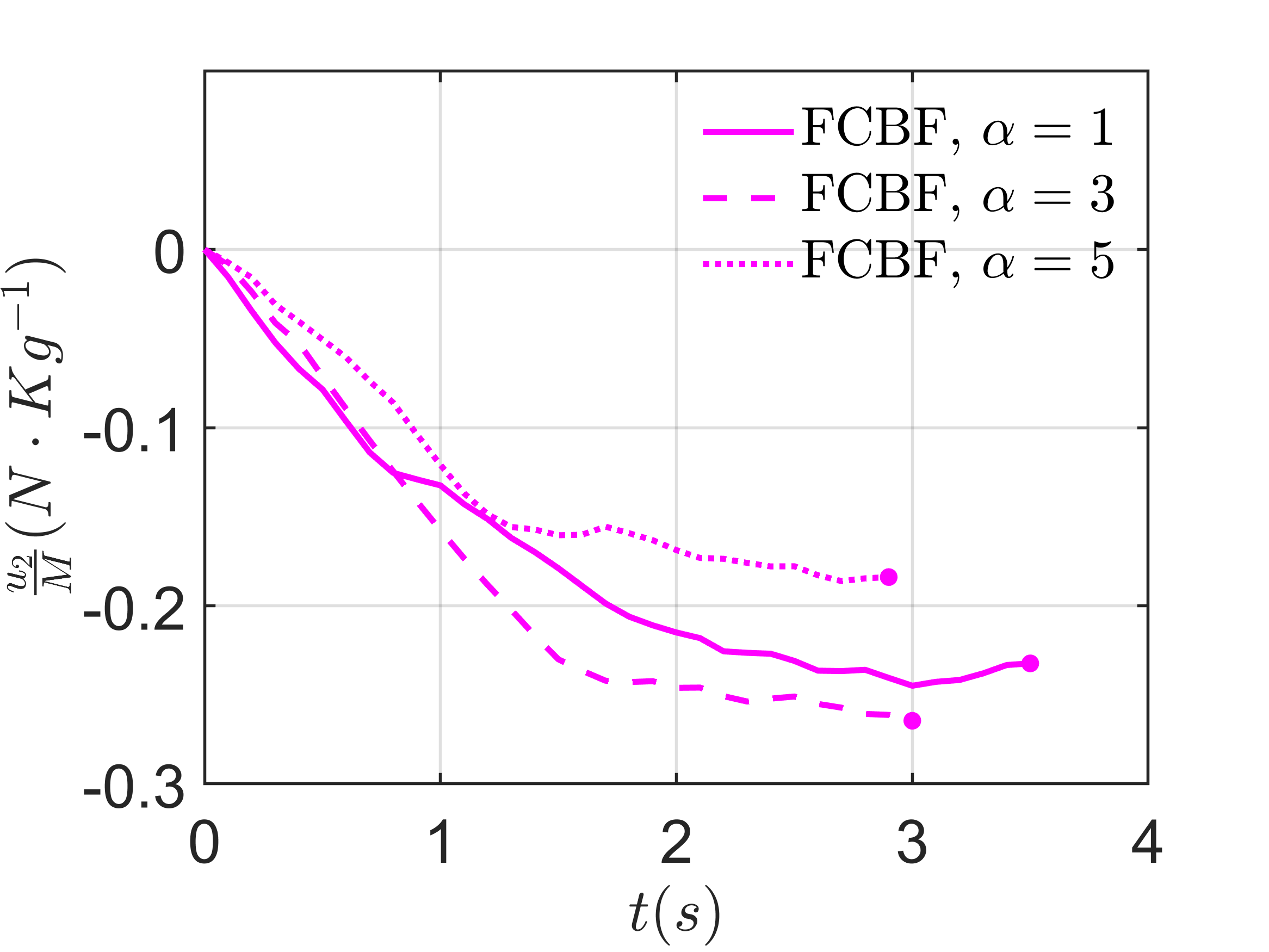}
        \caption{Control input $u_{2}$ over time}
        \label{fig:diff q12}
    \end{subfigure}
    \caption{FCBF ($k_{3}=1$, $\tau=2\cdot 10^{-3}$) with different class $\kappa$ function hyperparameter ($\alpha$) is evaluated. Smaller $\alpha$ tends to promote smoother variations in $\boldsymbol{u}$.} 
    \label{fig:FCBF-input3}
\end{figure}

\section{Conclusion and Future Work}
\label{sec:conclusion}
This paper proposes a new framework, Filtered Control Barrier Functions (FCBFs), that addresses the challenge of generating Lipschitz continuous control inputs in safety-critical systems. By introducing an auxiliary dynamic system acting as a low-pass filter, we guarantee that the resulting filtered input remains Lipschitz continuous and satisfies control input limitations while preserving the safety guarantees of traditional HOCBFs. The formulation retains the structure and computational efficiency of a single QP, making it amenable to real-time implementation. Simulation results demonstrate that FCBFs provide smoother and more flexible control trajectories compared to existing methods such as HOCBF and smoothness-penalized HOCBF. Future work will explore learning-based hyperparameter tuning methods for stochastic systems, with implementation on physical robotic platforms.
\label{sec:Conclusion and Future Work}

\bibliographystyle{IEEEtran}
\balance
\bibliography{references.bib}
\end{document}